\definecolor{bgcolor}{rgb}{0.9,0.9,0.9}
\newtheorem{lemma}{Lemma}
\newtheorem{theorem}{Theorem}
\newtheorem{proposition}{Proposition}
\newtheorem{corollary}{Corollary}
\newtheorem{remark}{Remark}
\newcommand{\R}{\mathbb{R}}
\newcommand{\Z}{\mathbb{Z}}
\newcommand{\rmd}{\mathrm{d}}
\newcommand{\map}{\mathfrak{m}}
\newcommand{\Ei}{\mbox{Ei}}
\newcommand{\weight}{w}
\newcommand{\metricmaps}{\mathcal{X}}
\newcommand{\maps}{\mathcal{M}}
\newcommand{\edges}{\mathcal{E}}
\newcommand{\vertices}{\mathcal{V}}
\newcommand{\aut}{\mathrm{Aut}}
\newcommand{\exploration}{\mathfrak{e}}
\newcommand{\frontier}{\vec{\mathcal{F}}}
\newcommand{\cfun}{\mathcal{C}_g}
\newcommand{\cfunp}{\hat{\mathcal{C}}_g}
\newcommand{\rt}{\mathrm{r}}
\newcommand{\compactfrac}[2]{{\textstyle\frac{#1}{#2}}}
\newenvironment{compress}{\begingroup\medmuskip=0mu\relax}{\endgroup}
\begin{document}

\title{Multi-point functions of weighted cubic maps}

\author[J. Ambj\o rn and T.G. Budd]{J. Ambj\o rn$^{1,2}$ and T.G. Budd$^{1}$}
\dedicatory{
$^1$~The Niels Bohr Institute, Copenhagen University\\
Blegdamsvej 17, DK-2100 Copenhagen \O , Denmark.\\
\vspace{3mm}
$^2$~Institute for Mathematics, Astrophysics and Particle Physics (IMAPP)\\ 
Radbaud University Nijmegen, Heyendaalseweg 135,
6525 AJ, Nijmegen, The Netherlands \\
\vspace{3mm}
email: {\tt ambjorn@nbi.dk, budd@nbi.dk}
}

\begin{abstract}
\noindent
We study the geodesic two- and three-point functions of random weighted cubic maps, which are obtained by assigning random edge lengths to random cubic planar maps. 
Explicit expressions are obtained by taking limits of recently established bivariate multi-point functions of general planar maps. 
We give an alternative interpretation of the two-point function in terms of an Eden model exploration process on a random planar triangulation.
Finally, the scaling limits of the multi-point functions are studied, showing in particular that the two- and three-point functions of the Brownian map are recovered as the number of faces is taken to infinity.
\end{abstract}
\maketitle

\section{Introduction}

During the last few decades an increasingly intricate picture of the geometry of random surfaces has emerged. 
Of particular importance are the multi-point functions that describe the probability distributions of geodesic distances between random points in the surfaces.
The first such two-point function was derived in \cite{ambjom_scaling_1995} in the setting of random planar triangulations by a transfer-matrix approach. 
A striking result of this analysis is that in the limit of large triangulations the volume of a geodesic ball on average does not grow with the square of its radius, but with its fourth power, providing strong evidence that two-dimensional quantum gravity possesses a fractal dimension that is different from its topological dimension.

Later the discovery of a distance-preserving bijection between planar quadrangulations and certain labeled planar trees \cite{schaeffer_conjugaison_1998,chassaing_random_2004} sparked a renewed interest in the distance statistics of random surfaces.
Using this bijection the two-point function of random planar quadrangulations \cite{bouttier_geodesic_2003} was established rigorously.
It was extended to more families of planar maps with various restrictions on the degrees of the faces \cite{francesco_geodesic_2005,bouttier_planar_2012} and recently to \emph{bivariate} two-point functions of general planar maps \cite{ambjorn_trees_2013,bouttier_two-point_2013}, i.e. with prescribed number of vertices and faces.
Moreover, a clever application of an extension of the previously mentioned bijection \cite{miermont_tessellations_2009} allowed for the three-point function, measuring the pair-wise distances between three random vertices, to be established for random quadrangulations  \cite{bouttier_three-point_2008}. 
Very recently, using the techniques from \cite{ambjorn_trees_2013}, also this was extended to a bivariate three-point function for general planar maps \cite{fusy_three-point_2014}.

Each of these two- and three-point functions can be seen to have identical asymptotics (up to normalization of the geodesic distance) in the limit of large graphs.
This may be viewed as a consequence of the recently proven fact that various families of random maps, including planar maps with fixed face degrees \cite{gall_uniqueness_2013,miermont_brownian_2013} and general planar maps \cite{bettinelli_scaling_2013}, converge as metric spaces to a single random continuous metric space, known as the \emph{Brownian map}. 

In this paper we will study distances in yet another family of planar maps, namely cubic planar maps, i.e. planar maps for which the vertices all have degree three.
Currently an exact expression for the two-point function of such planar maps is unknown (although the transfer matrix approach in \cite{kawai_transfer_1993} comes close), likely due to the fact that the usual distance-preserving bijections are not easily adapted to a setting where the degree of the vertices is restricted.
However, the situation changes when one adapts the notion of distance in a planar map by introducing length variables on the edges of the planar maps, the result of which we call a \emph{weighted map}. 
When one takes the edge lengths to be independent exponential variables, the associated multi-point functions of weighted maps turn out to be related to a particular limit of the bivariate multi-point functions of general planar maps (as we will see in section \ref{sec:relation}).
This limit was first studied \cite{ambjorn_trees_2013} as a non-trivial scaling limit, known as \emph{generalized causal dynamical triangulations}, of planar quadrangulations with finitely many local maxima of the distance functions to a marked vertex.

It would be of general interest to find out whether random weighted cubic maps as random metric spaces converge to the Brownian map, when the number of vertices is taken to infinity, just like for the previously mentioned families of maps.
Showing this is beyond the scope of this investigation, but once we have the two- and three-point functions we can at least check that in the scaling limit they agree with those of the Brownian map, as we will do in section \ref{sec:scaling}.

The geometry of random cubic weighted maps is not only interesting in its own right, but some of its aspects have a direct interpretation in terms of well-known statistical systems that are generalized to live on a random planar map instead of a regular lattice.
\emph{First passage percolation} \cite{hammersley_first-passage_1965} describes distances on the (two-dimensional) regular lattice for which the lengths of the edges are taken to be independent (identically distributed) random variables.
In the case of exponentially distributed edge lengths this model is known to be closely related to the \emph{Eden model} \cite{eden_two-dimensional_1961}, which describes the random growth of a cluster of vertices. 
It turns out that, similarly, knowledge of the multi-point functions of weighted cubic maps provides detailed information about the statistics of an Eden model on a random planar triangulation (see section \ref{sec:eden}).

A particularly interesting property of the Eden model on the lattice is that the fluctuations in the shapes of the clusters are believed to be described by the Kardar--Parisi--Zhang universality class \cite{kardar_dynamic_1986} in a particular scaling limit.
An interesting question, which was raised in \cite{miller_quantum_2013}, is whether an analogous scaling limit exists when the Eden model is coupled to gravity, e.g. by putting it on a random triangulation or cubic planar map.
And if it exists, what are its scaling exponents?
In order to investigate these questions we need an understanding of the relation between the distance in a weighted cubic map to the graph distance of the corresponding unweighted map.
As a first step in this direction, we will show how a particular limit of the derived multi-point functions allow us derive bounds on the ratio of these distances.

Finally, we should mention the potential relevance of this work to the program set out  by Miller and Sheffield in \cite{miller_quantum_2013}.
They propose a \emph{Quantum Loewner Evolution} as the scaling limit of the Eden model on a random triangulation and hope to identify the explored regions of this process with the geodesic balls in the Brownian map.
Given the close connection between the Eden model on a random triangulation and the random weighed cubic maps, any evidence of the convergence of the latter to the Brownian map could be useful in achieving that goal.

The paper is organized as follows.
In Section \ref{sec:rwm} we introduce the notion of random planar maps with random edge weights and define the corresponding two- and three-point functions.
In Section \ref{sec:relation} we show that random weighted cubic maps (including any number of marked univalent vertices) can be exactly obtained from uniform random maps, in the limit of large vertex to face ratio, by an operation that involves the removal of dangling edges and merging of edges that share a bivalent vertex.
This precise relation is then used in Sections \ref{sec:twop} and \ref{sec:threepgen} to deduce the two- and three-point functions of weighted cubic maps with univalent marked vertices from the explicitly known (bivariate) two- and three point function for uniform planar maps. 
In subsequent Sections \ref{sec:coinc} and \ref{sec:coll} we study various limits of the three-point functions, which lead to alternative two-point functions, including one for which the marked vertices have higher degrees (Section \ref{sec:coinc}) and one where the geodesic has a marked vertex (Section \ref{sec:coll}).
In Section \ref{sec:eden} we study the relation between the Eden model exploration process on random triangulations and cubic weighted maps and give an alternative interpretation of the two-point function.
This interpretation allows us in particular to derive a differential equation relating the two-point functions with marked vertices of arbitrary degree.
A general solution to this differential equation is constructed in Section \ref{sec:gentwop}.
In section \ref{sec:scaling} we determine the scaling limits of the multi-point functions in the grand-canonical ensemble and compare them to the known expressions for the Brownian map.

\section{Random weighted maps}\label{sec:rwm}

For a planar map $\map$, denote its set of undirected edges by $\mathcal{E}(\map)$, its set of directed edges by $\vec{\mathcal{E}}(\map)$, and its set of vertices by $\mathcal{V}(\map)$.
A \emph{weighted map} $(\map,L)\in\metricmaps$ is a pair consisting of a planar map $\map$ and a length function $L:\mathcal{E}(\map) \to \R_+$.
One can associate to $(\map,L)$ a metric space $X_{\map,L}$ given by the quotient metric space of a disjoint union of intervals,
\begin{equation}
X_{\map,L} :=\Big(\bigsqcup_{e\in\mathcal{E}(\map)} [0,L(e)]\Big) / \sim,
\end{equation}
where $\sim$ appropriately identifies end-points of intervals according to the incidence relations of $\map$.
The metric space $X_{\map,L}$ comes with a natural measure $\rmd\mu$ originating from the Lebesgue measure on the intervals.

In this paper the edge lengths $L(e)$ will be independent random variables taken from an exponential distribution with unit expectation value, i.e. for a fixed planar map $\map$ the probability measure is
\begin{equation}\label{eq:lengthmeasure}
\prod_{e\in\edges(\map)} \rmd L(e)\, e^{-L(e)}\quad\text{on}\quad \R_+^{|\edges(\map)|}.
\end{equation} 
A measure on the space of weighted maps $\metricmaps$ is obtained by taking the product of this measure and a measure on the space of planar maps $\maps$.
In general the latter will be a restriction of the \emph{uniform measure} $\nu_{F,n}$ on the space  $\maps_{F,n}$ of planar maps with $F$ faces and $n$ (distinguished and distinct) marked vertices. 
By ``uniform'' we mean that a planar map $\map$ carries measure $\nu_{F,n}(\{\map\})=1/|\aut(\map)|$, where $\aut(\map)$ is the group of (orientation preserving) automorphisms of $\map$ preserving the marked vertices.
Equivalently, one may consider the corresponding measure on rooted planar maps, i.e. planar maps with a distinguished directed edge, with measure $1/(2|\edges(\map)|)$. 
However, the root will play no special role in the following and therefore we choose to stick to unrooted planar maps to reduce notational clutter.
In addition, we will consider (restrictions) of the \emph{grand canonical} measure $\nu_n$ on the space $\maps_n$ of planar maps with $n$ marked vertices, which includes a factor of $g$ for each face, i.e.
\begin{equation}
\nu_{n} = \sum_{F=1}^{\infty} g^F \nu_{F,n}.
\end{equation}

Several subclasses of planar maps will be of importance later.
In each case we will use the notation $\maps^\bullet_{F,n}$, with $\bullet$ replaced with some identifier, to denote a subset of planar maps equipped with the (appropriate restriction of the) measure $\nu_{F,n}(\map)$ and $\metricmaps^\bullet_{F,n}$ to denote the corresponding weighted maps with the product measure $\nu_{F,n}(\map,L)$ of $\nu_{F,n}(\map)$ and (\ref{eq:lengthmeasure}).
Similar notation will be used for the grand canonical measures $\nu^\bullet_n$ on $\maps_n^\bullet$ and $\metricmaps_n^\bullet$.

Let $\maps_{F,n}^{(3)}$ be the space of \emph{cubic} planar maps, where each vertex has degree $3$.
More generally, we define the space of \emph{almost cubic maps} $\maps_{F,n}^{(d_1,\cdots,d_n)}$ to contain the planar maps for which the $n$ marked vertices have degree $d_1,\cdots,d_n$, respectively, while all other are vertices are cubic.
The special case where all marked vertices have equal degree $d$ will be denoted by $\maps_{F,n}^{(d)}$.

\subsection{Multi-point functions}
Given any of these subclasses of planar maps, we can define the multi-point functions
\begin{equation}\label{eq:multipointgeneral}
G^\bullet_{F,n}((D_{ij})_{1\leq i<j\leq n}) := \int_{\metricmaps_{F,n}^\bullet} \rmd\nu^\bullet_{F,n}(\map,L,v_i) \prod_{1\leq i<j\leq n} \delta(D_{ij} - d_{X_{\map,L}}(v_i,v_j)).
\end{equation}
In particular, we will consider the \emph{two-point function}\footnote{The two- and three-point functions should of course be viewed as distributions on $\R$ and $\R^3$ respectively, but for convenience we will abuse notation and treat them as (generalized) functions. We could have chosen to consider instead the continuous \emph{cumulative two- and three-point functions}, for which the $\delta(\cdot)$ is replaced by a step function in (\ref{eq:twopointgeneral}) and (\ref{eq:threepointgeneral}), but the equations would become more cumbersome.}
\begin{equation}\label{eq:twopointgeneral}
G^\bullet_{F,2}(T) = \int_{\metricmaps_{F,2}^\bullet} \rmd\nu^\bullet_{F,2}(\map,L,v_1,v_2)\,\, \delta(T - d_{X_{\map,L}}(v_1,v_2)),
\end{equation}
and the \emph{three-point function}
\begin{align}
G^\bullet_{F,3}(D_{12},D_{23},D_{31}) = \int_{\metricmaps_{F,3}^\bullet} & \rmd\nu^\bullet_{F,3}(\map,L,v_1,v_2,v_3)\,\, \delta(D_{12} - d_{X_{\map,L}}(v_1,v_2))\nonumber\\
&\quad\times\delta(D_{23} - d_{X_{\map,L}}(v_2,v_3))\delta(D_{31} - d_{X_{\map,L}}(v_3,v_1)).\label{eq:threepointgeneral}
\end{align}
Notice that, by definition of the measures $\rmd\nu^\bullet_{F,n}$, the vertices $v_i$ in (\ref{eq:twopointgeneral}) and (\ref{eq:threepointgeneral}) are not allowed to coincide.
Similarly one may define the grand canonical multi-point functions $G^\bullet_{g,n}((D_{ij})_{1\leq i<j\leq n})$ by replacing the measure in (\ref{eq:multipointgeneral}) by $\nu_n^\bullet$ on $\metricmaps_n^\bullet$.

\begin{remark}
Another natural two-point function one can assign to weighted maps is the \emph{geometric} two-point function
\begin{equation}\label{eq:geomtwopoint}
G^{\mathrm{geom}}_{F,2}(T) := \int_{\metricmaps_{F,0}} \rmd\nu_{F,0}(\map,L) \int_{X_{\map,L}} \!\!\!\!\rmd\mu(x_1)\rmd\mu(x_2)\,\, \delta(T - d_{X_{\map,L}}(x_1,x_2)),
\end{equation}
where $\mu$ is the natural (Lebesgue) measure on $X_{\map,L}$. 
When the points $x_1$, $x_2$ sit in generic positions, one can naturally associate a weighted map in $\metricmaps_{F,n}^{(2)}$ to $(\map,L,x_1,x_2)$ by insertion of marked bivalent vertices at the points $x_1$, $x_2$.
As a consequence of the exponential distribution of the edge lengths, the measure $\rmd\nu_{F,0}(\map,L)\rmd\mu(x_1)\rmd\mu(x_2)$ is precisely mapped to the measure $\rmd\nu_{F,2}$ restricted to $\metricmaps_{F,n}^{(2)}$.
Therefore $G^{\mathrm{geom}}_{F,2}(T) = G^{(2)}_{F,2}(T)$ as long as $F\geq 3$.
Similarly, the natural generalization of (\ref{eq:geomtwopoint}) to the three-point function $G^{\mathrm{geom}}_{F,3}(D_{12},D_{23},D_{31})$ agrees with $G^{(2)}_{F,3}(D_{12},D_{23},D_{31})$.
Notice that, contrary to $G^{\mathrm{geom}}_{F,2}(T)$ and $G^{\mathrm{geom}}_{F,3}(D_{12},D_{23},D_{31})$, $G^{(2)}_{F,2}(T)$ and $G^{(2)}_{F,3}(D_{12},D_{23},D_{31})$ are non-zero for $F=2$, which will turn out to be more natural from the combinatorial point of view.
\end{remark}

The generating function for the number $W_F^{(3)}$ of rooted cubic planar maps with $F\geq 3$ faces can be found, among other places, in \cite{gao_number_1991} and is given by
\begin{equation}
\sum_{F=3}^{\infty} W_F^{(3)} g^F = \frac{1}{2} t^3(1-t)(1-4t+2 t^2), \quad \frac{1}{2}t(1-t)(1-2t)=g,
\end{equation}
where the root $t=\mathcal{O}(g)$ is chosen, and $W_F^{(3)}$ is given explicitly by
\begin{equation}\label{eq:numrooted}
W_F^{(3)} = 2^{2F-3} \frac{(3F-6)!!}{F!(F-2)!!}.
\end{equation}

\begin{lemma}
The measure $\nu_{F,n}(\metricmaps_{F,n}^{(d)})=\nu_{F,n}(\maps_{F,n}^{(d)})$ of the set of (unrooted) almost cubic planar maps with $F$ faces and $n$ marked vertices of degree $d$ is given for $d=3,2,1$ by
\begin{align}
\nu_{F,n}(\maps^{(3)}_{F,n}) &= 2^{2F-4} \frac{(2F-4)!(3F-8)!!}{F!(2F-4-n)!(F-2)!!}&(F\geq 3,\,0\leq n\leq 2F-4) \label{eq:measure3}\\
\nu_{F,n}(\maps^{(2)}_{F,n}) &= 2^{2F-4} \frac{(3F-7+n)!}{F!(3F-7)!!(F-2)!!} &(F\geq 2,\,n+F\geq 3) \label{eq:measure2}\\
\nu_{F,n}(\maps^{(1)}_{F,n}) &= 2^{2F-4+n} \frac{(3F-8+2n)!!}{F!(F-2)!!}& (F\geq 1,\,n+F\geq 3) \label{eq:measure1}
\end{align}
\end{lemma}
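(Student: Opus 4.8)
The plan is to derive all three formulas from a single base quantity, the unmarked cubic measure, by decorating a cubic map and tracking the effect on the $1/|\aut|$ weight. By Euler's formula a cubic map with $F$ faces has $E=3F-6$ edges and $V=2F-4$ vertices, and since this is the same for all such maps, the standard relation between rooted and unrooted enumeration (each $\map$ admits $2|\edges(\map)|/|\aut(\map)|$ rootings) gives $W_F^{(3)} = 2(3F-6)\,\nu_{F,0}(\maps^{(3)}_{F,0})$. Dividing (\ref{eq:numrooted}) by $2(3F-6)$ and using $(3F-6)!!/(3F-6) = (3F-8)!!$ yields
\begin{equation}
\nu_{F,0}(\maps^{(3)}_{F,0}) = 2^{2F-4}\frac{(3F-8)!!}{F!(F-2)!!},
\end{equation}
which is the $n=0$ instance of (\ref{eq:measure3}); one checks it also agrees with the $n=0$ instances of (\ref{eq:measure2}) and (\ref{eq:measure1}) via the double factorial identity $m!=m!!\,(m-1)!!$.

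Next I would reduce each family to a decoration count by orbit counting. Fix a cubic isomorphism class $\map_0$ with automorphism group $A=\aut(\map_0)$, and let $S^{(d)}$ be the set of admissible degree-$d$ decorations placed on a rigidified representative of $\map_0$. Then $A$ acts on $S^{(d)}$, decorated isomorphism classes over $\map_0$ correspond to $A$-orbits, and orbit--stabilizer gives that an orbit $O$ contributes $1/|\mathrm{Stab}|=|O|/|A|$ to the weight. Summing over orbits and then over $\map_0$, and using that $|S^{(d)}|$ depends only on $F$, produces the clean identity $\nu_{F,n}(\maps^{(d)}_{F,n}) = |S^{(d)}|\,\nu_{F,0}(\maps^{(3)}_{F,0})$, so everything reduces to counting decorations. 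For $d=3$ a decoration is an ordered choice of $n$ distinct vertices, so $|S^{(3)}| = (2F-4)!/(2F-4-n)!$. For $d=2$ a decoration inserts $n$ labelled bivalent vertices; inserting them in label order subdivides one edge at each step, giving $|S^{(2)}| = \prod_{j=0}^{n-1}(3F-6+j) = (3F-7+n)!/(3F-7)!$, the construction being reversible by deleting the bivalent vertices in reverse order. For $d=1$ a decoration attaches $n$ labelled pendant edges; each attachment subdivides an edge (producing a trivalent attachment vertex) and, crucially, selects one of the two planar corners there, contributing a factor $2$, so $|S^{(1)}| = \prod_{j=1}^{n} 2(3F-8+2j) = 2^{n}\,(3F-8+2n)!!/(3F-8)!!$.

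Substituting these counts into the identity and simplifying the double factorials reproduces (\ref{eq:measure3})--(\ref{eq:measure1}): in the $d=2$ case $m!=m!!(m-1)!!$ turns $(3F-8)!!/(3F-7)!$ into $1/(3F-7)!!$, and in the $d=1$ case the $(3F-8)!!$ cancels, so that the prefactor $2^{2F-4+n}$ of (\ref{eq:measure1}) is precisely the base $2^{2F-4}$ times the $2^n$ coming from the corner choices.

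I expect the main obstacle to be the degenerate small-$F$ cases, where there is no cubic ``core'' to decorate and the reduction above breaks down: for $d=2$ the case $F=2$ consists of labelled $n$-gon necklaces (for which one computes $\nu=(n-1)!/2$ directly, matching the formula), and for $d=1$ the cases $F=1,2$ are planar trees, so these boundary instances must be checked by hand. The remaining points needing care are the precise reversibility, hence bijectivity, of the subdivision and pendant-attachment constructions so that no configuration is over- or under-counted, and the verification that attaching a pendant at a newly created bivalent vertex really contributes the factor $2$ and nothing more.
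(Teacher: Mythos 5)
Your proposal is correct and follows essentially the same route as the paper: derive the unmarked cubic measure from the rooted count $W_F^{(3)}$, then obtain the $d=2$ and $d=1$ cases by iteratively inserting labelled bivalent vertices (respectively pendant edges, with the factor $2$ for the side of the edge) into the $3F-7+n$ (respectively $3F-8+2n$) available edges, checking the degenerate small-$F$ cases by hand. The only difference is presentational: you make the orbit--stabilizer bookkeeping for $1/|\aut(\map)|$ explicit and collapse the paper's one-step recursions into a single product of decoration counts.
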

\begin{proof}
A cubic planar map with $F\geq 3$ faces has $|\edges(\map)|=3(F-2)$ edges and $|\vertices(\map)|=2(F-2)$ vertices.
Therefore $n$ (distinct) vertices can be marked in $(2F-4)!/(2F-4-n)!$ ways and (\ref{eq:measure3}) follows directly from (\ref{eq:numrooted}).

For $n\geq 1$ and $n+F\geq 4$, any almost cubic planar map in $\maps_{F,n}^{(2)}$ can be obtained uniquely from a planar map in $\maps_{F,n-1}^{(2)}$ by inserting a marked bivalent vertex in one of its $(3 F-7 +n)$ edges. 
Therefore $\nu_{F,n}(\maps^{(2)}_{F,n}) = (3F - 7+n) \nu_{F,n}(\maps^{(2)}_{F,n-1})$ and (\ref{eq:measure2}) follows from the fact that $\nu_{F,0}(\maps^{(2)}_{F,0})=\nu_{F,0}(\maps^{(3)}_{F,0})$ for $F\geq 3$, while the case $F=2$ can be checked by hand.
Similarly, for $n\geq 1$ and $n+F\geq 4$, any planar map in $\maps_{F,n}^{(1)}$ is uniquely obtained from a planar map in $\maps_{F,n-1}^{(2)}$ by inserting a cubic vertex in one of its $(3 F-8 +2n)$ edges and connecting it to a new marked 1-valent vertex.
Since the latter vertex can be on either side of the edge, we find that $\nu_{F,n}(\maps^{(1)}_{F,n}) = 2(3F - 8+2n) \nu_{F,n}(\maps^{(1)}_{F,n})$.
Again we deduce (\ref{eq:measure1}) from $\nu_{F,0}(\maps^{(1)}_{F,0})=\nu_{F,0}(\maps^{(3)}_{F,0})$ for $F\geq 3$, while the cases $F=1$ and $F=2$ are easily checked.
\end{proof}

\subsection{Properties of the multi-point functions}
Based on these results we can already determine a number of limits and integrals of the multi-point functions (\ref{eq:twopointgeneral}) and (\ref{eq:threepointgeneral}).
First of all, let us consider the $T\to 0$ limit of the two-point function $G_{F,2}^{\bullet}(T)$.
It is not hard to see that in this limit the only planar maps $(\map,L)\in\metricmaps^{\bullet}_{F,2}$ contributing to $G_{F,2}^{\bullet}(T)$ are those for which the marked vertices $v_1$ and $v_2$ are connected by an edge $e$ with length $L(e)\to 0$.
Hence we may write
\begin{equation}\label{eq:G2limit}
G^{\bullet}_{F,2}(0):=\lim_{T\to 0} G^{\bullet}_{F,2}(T) = \lim_{T\to 0}\int_{\metricmaps_{F,2}^\bullet} \rmd\nu_{F,2}(\map,L,v_1,v_2)\sum_{e\supset\{v_1,v_2\}} \delta(T - L(e)),
\end{equation}
where the sum is over edges $e$ connecting $v_1$ and $v_2$.

In the case $\map\in \metricmaps_{F,2}^{(1)}$ only the planar map $\map$ consisting of a single edge will contribute, leading to
\begin{equation}\label{eq:G2-1zero}
G_{F,2}^{(1)}(0) = \delta_{F,1}.
\end{equation}
On the other hand, if $d\geq 2$ and $\map\in \metricmaps_{F,2}^{(d)}$ contributes to (\ref{eq:G2limit}) with an edge $e\in \edges(\map)$, one can contract the edge to obtain a weighted map in $\metricmaps_{F,1}^{(2d-2)}$. 
Since any weighted map in $\metricmaps_{F,1}^{(2d-2)}$ can be obtained accordingly in exactly $2d-2$ ways, we find that
\begin{equation}\label{eq:G2-dzero}
G_{F,2}^{(d)}(0) = 2(d-1)\nu_{F,1}(\metricmaps^{(2d-2)}_{F,1}) \quad\quad(d\geq 2).
\end{equation}
More generally, for $d_1,d_2\geq 1$ we have
\begin{equation}\label{eq:G2-generalzero}
G_{F,2}^{(d_1,d_2)}(0) = (d_1+d_2-2)\nu_{F,1}(\metricmaps^{(d_1+d_2-2)}_{F,1}).
\end{equation}

One can check that the symmetry factors $1/|\aut(\map)|$ in the measure are appropriately taken care of in (\ref{eq:G2-dzero}) and (\ref{eq:G2-generalzero}) in the following way.
We could have replaced $1/|\aut(\map)|$ by $1/d_1$ and summed over planar maps that are rooted at an edge leaving the first vertex $v_1$.
Alternatively one can drop the factor $1/d_1$ and demand that the root edge is the shortest edge leaving $v_1$.
The symmetry factor in (\ref{eq:G2-generalzero}) is a result of the fact that contracting the shortest edge leaving $v_1$ results naturally in a planar map rooted at a (not necessarily shortest) edge leaving its marked vertex with degree $d_1+d_2-2$.
Very similar arguments will be used implicitly in the derivations to follow.
 
Finally, from their definition one can immediately deduce that the integrals of the two- and three-point functions are given by
\begin{align}
\int_0^{\infty}\rmd T\, G^{\bullet}_{F,2}(T) = \nu_{F,2}(\metricmaps^{\bullet}_{F,2} ), \label{eq:G2allT}\\
\int_0^{\infty}\rmd D_{12}\int_0^{\infty}\rmd D_{23}\int_0^{\infty}\rmd D_{31} G_{F,3}^{\bullet}(D_{12},D_{23},D_{31}) =\nu_{F,3}(\metricmaps^{\bullet}_{F,3} ).
\end{align}

\section{Weighted maps from general planar maps}\label{sec:relation}

Let $\map\in\maps_{F,n}$, $F+n\geq 3$, be a planar map with $F$ faces and $n$ (distinguished) marked vertices.
We define $\Phi(\map)$ to be the unique maximal submap of $\map$ containing the $n$ marked vertices while all its other vertices have degree larger than one, which can be obtained from $\map$ by repeatedly deleting \emph{dangling edges}, i.e. edges of which one of the extremities is unmarked and has degree one.
When $F+n\geq 3$, this determines a well-defined, idempotent map $\Phi:\maps_{F,n}\to\maps_{F,n}$ (see figure \ref{fig:phipsi}).

Given $\ell>0$, we can associate a weighted map $\Psi_\ell(\map):=(\map',L)$ to a planar map $\map\in\maps_{F,n}$ in the following way. 
We define the planar map $\map'$ to have vertex set $\vertices(\map')\subset\vertices(\map)$ given by all vertices of $\map$ except the unmarked bivalent vertices.
An edge $e\in\edges(\map')$ with length $L(e)=k \ell$ from $v_1$ to $v_2$ exists if and only if there is a chain of edges $\{e_1,\ldots,e_k\}\subset\edges(\map)$ of length $k$ from $v_1$ to $v_2$ such that for each $i=1,\ldots,k-1$, $e_i$ and $e_{i+1}$ share an unmarked bivalent vertex.
Since $\map'$ naturally inherits an embedding from $\map$, the weighted map $(\map',L)$ is well-defined and unique.
Therefore we have a map $\Psi_\ell : \maps_{F,n} \to \metricmaps_{F,n}$ (see figure \ref{fig:phipsi}).

The maps $\Phi(\map)$ and (the unweighted version of) $\Psi_\ell\circ\Phi (\map)$ are closely related to the so-called \emph{core} and \emph{kernel} of the planar map $\map$ (see e.g. \cite{noy_random_2014}), respectively, with the only difference that we require the marked vertices to be maintained. 

\begin{figure}[t]
\begin{center}
\includegraphics[width=\linewidth]{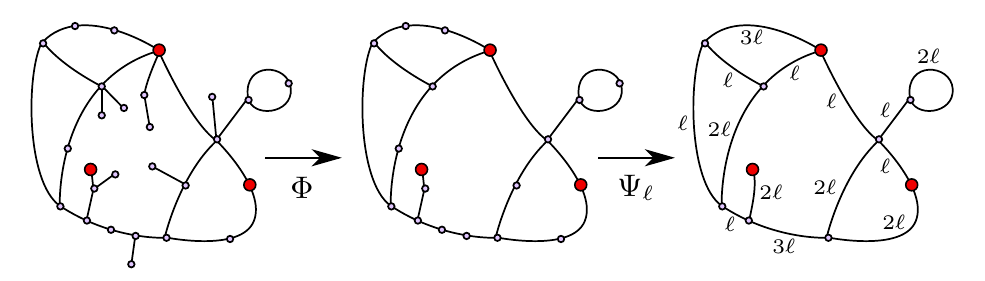}
\end{center}
\caption{The maps $\Phi:\maps_{F,n}\to\maps_{F,n}$ and $\Psi_\ell : \maps_{F,n} \to \metricmaps_{F,n}$ applied to a planar map with $F=6$ faces and $n=3$ marked vertices.}%
\label{fig:phipsi}
\end{figure}

Let $\nu_{F,n,x}$ be the discrete measure on the space $\maps_{F,n}$ of planar maps with $F$ faces and $n$ marked vertices such that each such planar map $\map$ carries measure $x^{|\edges(\map)|}/|\aut(\map)|$.

\begin{theorem}\label{thm:pushforward}
The pushforward measure $\ell(x)^{3F+2n-6}(\Psi_{\ell(x)}\circ\Phi)_*(\nu_{F,n,x})$ with $\ell(x) = \sqrt{4-16x}$ converges weakly as $x \to 1/4$ from below to the measure $\nu_{F,n}$ on the space $\metricmaps_{F,n}^{(1)}$ of weighted almost cubic maps with $n$ marked univalent vertices.
\end{theorem}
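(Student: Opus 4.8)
The plan is to describe each map in a fiber of $\Psi_{\ell}\circ\Phi$ by means of the standard core/kernel decomposition, read off the generating function of the dangling structure that is stripped away, and then analyze its behaviour as $x\to 1/4$. The normalization $\ell^{3F+2n-6}$ is a strong hint: by Euler's relation an almost cubic map with $n$ univalent marked vertices has exactly $3F+2n-6$ edges, so this is one power of $\ell$ per edge of the target kernel, which is precisely what will turn discrete chain lengths into continuous edge lengths.

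First I would fix a weighted almost cubic kernel $(\map',L)$ with $L(e)=k_e\,\ell$ for positive integers $k_e$, and enumerate the maps $\map\in\maps_{F,n}$ with $\Psi_{\ell}\circ\Phi(\map)=(\map',L)$. Every such $\map$ arises uniquely from $\map'$ by (i) subdividing each edge $e$ into a chain of $k_e$ edges, inserting $k_e-1$ unmarked bivalent vertices, and (ii) planting at every corner of the resulting \emph{core} an arbitrary plane tree of dangling edges, which $\Phi$ removes and which, being trees, leave the face count $F$ unchanged (this is the core/kernel decomposition of \cite{noy_random_2014}, with the marked vertices retained). Writing $C(x)=\frac{1-\sqrt{1-4x}}{2x}$ for the generating function of plane trees by edges, a short recursive argument ($D=1+x\,D^2$) shows that the decoration at a single corner has generating function exactly $C(x)$ and that $x\,C(x)^2=C(x)-1$. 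Since the core has $\sum_e k_e$ edges and hence $2\sum_e k_e$ corners, the total $\nu_{F,n,x}$-weight of the fiber is $\frac{1}{|\aut(\map')|}\,x^{\sum_e k_e}C(x)^{2\sum_e k_e}=\frac{1}{|\aut(\map')|}\bigl(x\,C(x)^2\bigr)^{\sum_e k_e}$, where the collapse of the symmetry factors to $1/|\aut(\map')|$ is handled most cleanly by passing to rooted maps, exactly as in the proof of the Lemma.

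Next I would insert $\ell(x)=\sqrt{4-16x}$. A direct computation gives $C(x)=2/(1+\ell/2)$ and hence $x\,C(x)^2=(2-\ell)/(2+\ell)=e^{-\ell+O(\ell^3)}$, so $\bigl(x\,C(x)^2\bigr)^{\sum_e k_e}=\prod_e e^{-\ell k_e\,(1+O(\ell^2))}$. Keeping $L(e)=\ell k_e$ fixed and letting $\ell\to0$, each factor tends to $e^{-L(e)}$, while the prefactor $\ell^{3F+2n-6}$ turns the sums over the $k_e$ into Riemann sums: for a kernel with $E_K$ edges one gets $\ell^{3F+2n-6}\prod_e\sum_{k_e\ge1}(\cdots)\to\ell^{3F+2n-6-E_K}\prod_e\int_0^\infty \rmd L(e)\,e^{-L(e)}(\cdots)$. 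Because $x\,C(x)^2<1$ for $x<1/4$, the chain-length sums are dominated by convergent geometric series uniformly in $\ell$, which justifies dominated convergence and the interchange of the $\ell\to0$ limit with the (finite) sum over kernel types and the sums over the $k_e$.

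It remains to select the surviving kernels. Euler's relation $|\vertices(\map')|-E_K+F=2$ together with the degree bounds (unmarked kernel vertices have degree $\ge3$, marked vertices degree $\ge1$) yields $E_K\le 3F+2n-6$, with equality if and only if every unmarked vertex is cubic and every marked vertex is univalent, i.e. $(\map',L)\in\metricmaps_{F,n}^{(1)}$. Hence kernels with a higher-degree unmarked vertex or a marked vertex of degree $\ge2$ carry a strictly positive power $\ell^{3F+2n-6-E_K}$ and drop out, while each maximal kernel contributes $\frac{1}{|\aut(\map')|}\prod_e \rmd L(e)\,e^{-L(e)}$, which is exactly $\nu_{F,n}$ restricted to $\metricmaps_{F,n}^{(1)}$ by (\ref{eq:lengthmeasure}); testing against continuous bounded functions then gives the claimed weak convergence. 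I expect the main obstacle to be the automorphism bookkeeping in the fiber count — confirming that $1/|\aut(\map)|$ summed over the fiber really collapses to $1/|\aut(\map')|$ times the decoration generating function — together with the uniform tail control needed to upgrade the product of geometric chain-length distributions to the product of unit-mean exponential laws of (\ref{eq:lengthmeasure}).
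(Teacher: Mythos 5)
Your proposal is correct and follows essentially the same route as the paper: the core/kernel decomposition with the Catalan generating function $C(x)=(1-\sqrt{1-4x})/(2x)$ per corner, the identity $xC(x)^2=(2-\ell)/(2+\ell)=e^{-\ell+O(\ell^3)}$ turning geometric chain-length weights into unit-mean exponentials, and the Euler-formula bound $|\edges|\leq 3F+2n-6$ (with equality exactly for almost cubic kernels with univalent marked vertices) to kill all other kernels via the leftover power of $\ell$. The only differences are organizational — you work fiber-by-fiber over the target kernel and make the dominated-convergence step explicit, whereas the paper computes the pushforwards under $\Phi$ and $\Psi_\ell$ in two stages — but the substance is identical.
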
  
\begin{proof}
We will first determine the discrete measure $\tilde{\nu}_{F,n,x}:=\Phi_*(\nu_{F,n,x})$. 
Let $\map\in\Phi(\maps_{F,n})$.
Each map in the preimage $\Phi^{-1}(\map)$ can be obtained uniquely from $\map$ by gluing a (potentially empty) rooted tree in each corner of $\map$. 
The number of rooted trees with $E$ edges is given by the Catalan number $\binom{2E}{E}/(E+1)$ with generating function $(1-\sqrt{1-4x}))/2x$.
Since the number of corners of $\map$ equals $2|\edges(\map)|$ we find that
\begin{equation}\label{eq:measurePhi}
\tilde{\nu}_{F,n,x}(\{\map\}) = \nu_{F,n,x}(\Phi^{-1}(\map)) = \frac{1}{|\aut(\map)|} \left(\frac{1-\sqrt{1-4x}}{2x}\right)^{2|\edges(\map)|} x^{|\edges(\map)|}.
\end{equation}
By construction the discrete measure $\tilde{\nu}_{F,n,x,\ell}:=\ell^{3F+2n-6}(\Psi_{\ell}\circ\Phi)_*(\nu_{F,n,x})$ has support on weighted maps $(\map,L)\in\metricmaps_{F,n}$ for which $\map$ contains no unmarked vertices with degree smaller than three and $L(e)/\ell\in\Z_+$ is a positive integer for each edge $e$.
Using (\ref{eq:measurePhi}) we find that for such a weighted map
\begin{equation}\label{eq:measurePsi}
\tilde{\nu}_{F,n,x,\ell}(\{(\map,L)\}) = \frac{\ell^{3F+2n-6}}{|\aut(\map)|} \left(\frac{(1-\sqrt{1-4x})^2}{4x}\right)^{\sum_{e\in\edges(\map)}L(e)/\ell}.
\end{equation}
Hence for fixed $\map$, the measure of the set of weighted maps $(\map,L)$ with arbitrary $L$ is given by
\begin{equation}\label{eq:marginalmeasure}
\frac{\ell^{3F+2n-6}}{|\aut(\map)|} \left(\frac{1}{\sqrt{4-16x}}-\frac{1}{2}\right)^{|\edges(\map)|}=\frac{\ell^{3F+2n-6}}{|\aut(\map)|} \left(\frac{1}{\ell(x)}-\frac{1}{2}\right)^{|\edges(\map)|}.
\end{equation}
Using Euler's formula $|\vertices(\map)|-|\edges(\map)|+F=2$, one finds that $|\edges(\map)| \leq 3F+2n-6$, while equality holds if and only if the $n$ marked vertices have degree one and all other vertices are cubic.
Therefore, if we set $\ell=\ell(x)$ and consider the limit as $x\to 1/4$ from below, we find in the latter case that (\ref{eq:marginalmeasure}) converges to $1/|\aut(\map)|$ and to zero otherwise.

It remains to show that for a fixed planar map $\map$ and a continuous function $f : \R_+^{|\edges(\map)|}\to \R$ we have
\begin{equation}\label{eq:measureconvergence}
\lim_{x\uparrow 1/4}\int_{\R_+^{|\edges(\map)|}} f(L) \tilde{\nu}_{F,n,x,\ell}(\map,\rmd L) = \int_{\R_+^{|\edges(\map)|}} f(L) \nu_{F,n}(\map,\rmd L).
\end{equation}
To leading order in $\ell=\ell(x)$, the right-hand side of (\ref{eq:measurePsi}) is given by
\begin{equation}
\frac{1}{|\aut(\map)|} \prod_{e\in\edges(\map)}\ell(x) e^{-L(e)}
\end{equation}
and therefore 
\begin{equation}
\lim_{x\uparrow 1/4}\int_{\R_+^{|\edges(\map)|}} f(L) \rmd\tilde{\nu}_{F,n,x,\ell} = \lim_{\ell\downarrow 0}\frac{1}{|\aut(\map)|} \sum_{\vec{k}\in \Z_+^{|\edges(\map)|}} \ell^{|\edges(\map)|} f(\ell\vec{k}) e^{-\ell \sum_{e\in\edges(\map)}\vec{k}_e},
\end{equation}
which indeed gives (\ref{eq:measureconvergence}).
\end{proof}

Recently, explicit generating functions have been found for the two-point functions \cite{ambjorn_trees_2013,bouttier_two-point_2013} and three-point functions \cite{fusy_three-point_2014} for general planar maps with fixed number of edges and faces.
In the following sections, we will show how Theorem \ref{thm:pushforward} can be used to derive the weighted map multi-point functions from these functions. 

\section{Two-point function}\label{sec:twop}

As in the previous section, let $\nu_{F,n,x}$ be the discrete measure $x^{|\edges(\map)|}/|\aut(\map)|$ on planar maps $\map\in\maps_{F,n}$ with $F$ faces and $n$ marked vertices.
The \emph{discrete two-point function} $\mathcal{G}_{F,x}:\Z \to \R$ for general planar maps is defined as
\begin{equation}
\mathcal{G}_{F,x}(t) := \int_{\maps_{F,2}} \rmd\nu_{F,2,x}(\map,v_1,v_2)\,\delta_{t,d_{\map}(v_1,v_2)},
\end{equation}
where $d_{\map}(v_1,v_2)$ is the graph distance between the marked vertices and $\delta_{i,j}=1$ iff $i=j$ and $\delta_{i,j}=0$ otherwise.
The corresponding generating function is denoted by
\begin{equation}
\mathcal{G}_{z,x}(t) := \sum_{F=1}^{\infty} z^F \mathcal{G}_{F,x}(t).
\end{equation}
In \cite{ambjorn_trees_2013,bouttier_two-point_2013,fusy_three-point_2014} an explicit expression for $\mathcal{G}_{z,x}(t)$ was found, which, borrowing some notation from \cite{fusy_three-point_2014}, reads
\begin{equation}\label{eq:discreteG2expr}
\mathcal{G}_{z,x}(t) = \log\left(\frac{[t+1]_{\sigma,a}^3[t+3]_{\sigma,a}}{[t]_{\sigma,a}[t+2]_{\sigma,a}^3}\right),\quad [t]_{\sigma,a} := 1 - a \,\sigma^t,
\end{equation}  
where $\sigma=\sigma(z,x)=x+\mathcal{O}(x^2)$ and $a=a(z,x)=z + \mathcal{O}(x)$ are the unique solutions that are analytic at the origin to
\begin{align}
x &= \frac{\sigma (1-a \sigma)^3(1-a \sigma^3)}{(1+\sigma+a \sigma-6a \sigma^2+a \sigma^3+a^2 \sigma^3+a^2 \sigma^4)^2},\label{eq:eqx}\\
z &= \frac{a(1-\sigma)^3(1-a^2 \sigma^3)}{(1-a \sigma)^3(1-a \sigma^3)}.\label{eq:eqz}
\end{align}

\begin{theorem}\label{thm:twopoint12}
The two-point functions $G_{g,2}^{(1)}(T)$ and $G_{g,2}^{(2)}(T)$, with $0<g\leq g^*:=1/(12\sqrt{3})$, for almost cubic weighted maps are given by
\begin{align}
G_{g,2}^{(1)}(T) &= \partial_T^3 \log\cfun(T), \label{eq:G2-1formula}\\
G_{g,2}^{(2)}(T) &= \compactfrac{1}{4}(1+\partial_T)^2\partial_T^3 \log\cfun(T),\label{eq:G2-2formula}
\end{align}
where
\begin{equation}\label{eq:Cgdef}
\cfun(T):= \Sigma \cosh \Sigma T+\alpha \sinh \Sigma T, \quad\quad \Sigma := \sqrt{\compactfrac{3}{2}\alpha^2-\compactfrac{1}{8}}
\end{equation}
and $\alpha$ is the largest positive solution to 
\begin{equation}\label{eq:eqalpha}
\alpha^3 - \alpha/4 + g = 0,
\end{equation}
or explicitly
\begin{equation}
\alpha = \frac{12 g}{1-\cos\left(\frac{2}{3}\arcsin(12\sqrt{3}g)\right)+\sqrt{3}\sin\left(\frac{2}{3}\arcsin(12\sqrt{3}g)\right)}.
\end{equation}
\end{theorem}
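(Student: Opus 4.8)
The plan is to read off $G_{g,2}^{(1)}(T)$ directly as a scaling limit of the discrete two-point function $\mathcal{G}_{z,x}(t)$ using Theorem~\ref{thm:pushforward}, and then to obtain $G_{g,2}^{(2)}(T)$ from it by a combinatorial surgery at the marked vertices. For the first step I set up the scaling dictionary. Deleting dangling edges ($\Phi$) does not change the graph distance between the two marked vertices, and collapsing maximal unmarked bivalent chains ($\Psi_\ell$) rescales it by $\ell$, so the weighted distance between the marked vertices of $\Psi_\ell(\Phi(\map))$ equals $\ell\,d_\map(v_1,v_2)$. Applying Theorem~\ref{thm:pushforward} with $n=2$, and turning the atom sitting at $T=\ell t$ into a density (which contributes one extra power of $\ell^{-1}$), gives for fixed $F$ the identity $G_{F,2}^{(1)}(T)=\lim_{\ell\downarrow 0}\ell^{3F-3}\mathcal{G}_{F,x}(T/\ell)$ with $x=x(\ell)=\tfrac14-\tfrac{1}{16}\ell^2$. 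Summing against $g^F$ and absorbing the factor $\ell^{3F}$ into the generating variable, i.e.\ setting $z=g\ell^3$, collapses the whole family into
\begin{equation*}
G_{g,2}^{(1)}(T)=\lim_{\ell\downarrow 0}\ell^{-3}\,\mathcal{G}_{g\ell^3,\,x(\ell)}(T/\ell).
\end{equation*}

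Next I would analyse the behaviour of $\sigma$ and $a$ along the curve $z=g\ell^3$, $x=\tfrac14-\tfrac1{16}\ell^2$ as $\ell\downarrow 0$, writing $\sigma=1-\beta\ell+O(\ell^2)$ and $a=a_0+O(\ell)$. Putting $\sigma=1$ into \eqref{eq:eqx} yields $x=\tfrac14$ identically in $a$, so $a_0$ is not fixed at leading order; instead the leading $\ell^3$ balance of \eqref{eq:eqz} gives $g=a_0\beta^3(1+a_0)/(1-a_0)^3$, while the vanishing of the $O(\ell)$ coefficient and the matching of the $O(\ell^2)$ coefficient of \eqref{eq:eqx} determine $\beta$ in terms of $a_0$. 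I expect the outcome to be $\beta=2\Sigma$ together with $(1+a_0)/(1-a_0)=\alpha/\Sigma$, equivalently $a_0=(\alpha-\Sigma)/(\alpha+\Sigma)$; the consistency of these with $\Sigma=\sqrt{\tfrac32\alpha^2-\tfrac18}$ is exactly what turns $g=a_0\beta^3(1+a_0)/(1-a_0)^3$ into $g=2\alpha(\alpha^2-\Sigma^2)=\tfrac14\alpha-\alpha^3$, reproducing the cubic \eqref{eq:eqalpha}.

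For the continuum limit of the explicit formula I would rewrite \eqref{eq:discreteG2expr} as $\mathcal{G}_{z,x}(t)=(E-1)^3\log(1-a\sigma^t)$, where $E$ is the unit shift $t\mapsto t+1$. With $t=T/\ell$ one has $E=\rme^{\ell\partial_T}$ on functions of $T$, so $E-1=\ell\partial_T+O(\ell^2)$ and $(E-1)^3=\ell^3\partial_T^3+o(\ell^3)$, while $1-a\sigma^{T/\ell}\to 1-a_0\rme^{-\beta T}$. Hence $\ell^{-3}\mathcal{G}_{g\ell^3,x(\ell)}(T/\ell)\to\partial_T^3\log(1-a_0\rme^{-\beta T})$. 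Factoring $\rme^{-\beta T/2}$ out of the argument produces a term linear in $T$ that is annihilated by $\partial_T^3\log$, leaving $\partial_T^3\log\!\big[(1-a_0)\cosh(\tfrac{\beta}{2}T)+(1+a_0)\sinh(\tfrac{\beta}{2}T)\big]$; with $\beta=2\Sigma$ and $(1+a_0)/(1-a_0)=\alpha/\Sigma$ the bracket is proportional to $\cfun(T)$, and since $\partial_T^3\log$ ignores the multiplicative constant this is precisely \eqref{eq:G2-1formula}.

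Finally, for the bivalent case I would relate the grand canonical classes $\metricmaps^{(2)}_{2}$ and $\metricmaps^{(1)}_{2}$: a weighted map with two univalent marked vertices is obtained from one with two bivalent marked vertices by attaching at each marked (bivalent) vertex a pendant edge carrying an independent unit-exponential length and moving the mark to its new univalent endpoint. There are two corners at each bivalent vertex, giving a factor $2\times2=4$, and the distance grows by the sum $s_1+s_2$ of the pendant lengths; in the Laplace variable $p\leftrightarrow\partial_T$ this convolution multiplies by $4/(1+p)^2$, so inverting yields $G_{g,2}^{(2)}(T)=\tfrac14(1+\partial_T)^2G_{g,2}^{(1)}(T)$, which is \eqref{eq:G2-2formula}. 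I expect the main obstacle to be step two together with the analytic justification of step one: expanding the implicit equations \eqref{eq:eqx}--\eqref{eq:eqz} to the order needed to isolate $\beta$ and $a_0$ and checking that the algebra collapses to \eqref{eq:eqalpha}, and rigorously interchanging the limit $\ell\downarrow 0$ with the sum over $F$ and with the distributional convergence, thereby upgrading the formal finite-difference-to-derivative replacement to genuine weak convergence of the two-point function.
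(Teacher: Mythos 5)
Your derivation of \eqref{eq:G2-1formula} follows the paper's route essentially step for step: the same parametrization $x=1/4-\ell^2/16$, $z=g\ell^3$, the same leading asymptotics $\sigma=1-2\Sigma\ell+\mathcal{O}(\ell^2)$ and $a\to(\alpha-\Sigma)/(\alpha+\Sigma)$, and the same reading of $\mathcal{G}_{z,x}(t)$ as a third finite difference of $\log[t]_{\sigma,a}$ collapsing to $\partial_T^3\log\cfun(T)$. The two items you flag as outstanding are precisely what the paper supplies: the finite-difference-to-derivative step is made rigorous by smearing against a test function $\phi$ and moving the three shifts onto $\phi$ by summation by parts, and the interchange of $\lim_{\ell\downarrow 0}$ with $\sum_F g^F$ is justified by a uniform bound of each summand by $c\,\nu_{F,2}(\metricmaps^{(1)}_{F,2})<c\,B^{-F}$. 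So for the univalent two-point function your proposal is the paper's proof in outline.

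The genuine gap is in your derivation of \eqref{eq:G2-2formula}. The pendant-attachment map from (bivalent-marked maps) $\times$ (choice of two corners) $\times$ (two exponential lengths) into the univalent-marked class is injective but \emph{not surjective}, so the convolution identity you invert, namely $\hat{G}^{(1)}_{g,2}(p)=\tfrac{4}{(1+p)^2}\hat{G}^{(2)}_{g,2}(p)$ in the Laplace variable, is false as stated; it already fails at order $g$, since $G^{(1)}_{g,2}(T)=g\,e^{-T}+\mathcal{O}(g^2)$ while $G^{(2)}_{g,2}(T)=\mathcal{O}(g^2)$. The configurations your construction misses are (i) the single-edge map with $F=1$, contributing $g\,e^{-T}$, and (ii) the maps in which the two pendant edges share their interior endpoint, so that deleting them leaves one univalent marked vertex rather than two bivalent ones; these contribute a term proportional to $T e^{-T}$. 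The correct identity is $G^{(1)}_{g,2}(T)=g\,e^{-T}+G^{(B)}_{g,2}(T)+G^{(A)}_{g,2}(T)$ with only $G^{(A)}_{g,2}$ equal to the double convolution of $G^{(2)}_{g,2}$ with $e^{-s}\,\rmd s$, and \eqref{eq:G2-2formula} survives only because $(1+\partial_T)^2$ annihilates both $e^{-T}$ and $T e^{-T}$ --- a cancellation your argument never invokes and which is the actual content of this step. You need to split off these degenerate cases explicitly before inverting the convolution.
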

\begin{proof}
Let $\phi(T):\R_+\to \R$ be a smooth test function with support on some compact subinterval of $\R_+$.
Then by definition 
\begin{equation}\label{eq:smearG2}
\int_0^{\infty} \rmd T\,\phi(T) G_{F,2}^{(1)}(T) = \int_{\metricmaps_{F,2}^{(1)}} \rmd\nu_{F,2}(\map,L,v_1,v_2) \phi( d_{X_{\map,L}}(v_1,v_2) ).
\end{equation} 
Suppose $\ell>0$ and let $\map\in\maps_{F,2}$ be a planar map with marked vertices $v_1$ and $v_2$.
By construction, the distance $d_{X_{\Psi_\ell\circ\Phi(\map)}}(v_1,v_2)$ in the weighted map $\Psi_\ell\circ\Phi(\map)$ is given by $\ell d_{\map}(v_1,v_2)$.
Therefore, by Theorem \ref{thm:pushforward} one can express (\ref{eq:smearG2}) in terms of $\mathcal{G}_{F,x}(t)$ by
\begin{align}
\int_0^{\infty} \rmd T\,\phi(T) G_{F,2}^{(1)}(T) &= \lim_{x\uparrow 1/4} \ell(x)^{3F-2} \int_{\maps_{F,2}} \rmd\nu_{F,2,x}(\map,v_1,v_2) \phi(\ell(x)d_{\map}(v_1,v_2)) \nonumber\\
&= \lim_{\ell\to 0} \sum_{t=1}^{\infty} \ell^{3F-2} \phi(\ell t) \mathcal{G}_{F,x(\ell)}(t), \quad\quad x(\ell):=(1-\ell^2/4)/4.
\end{align}
For $g>0$ sufficiently small, one can turn this into a relation between the generating functions $G_{g,2}^{(1)}(T)$ and $\mathcal{G}_{z,x}(t)$,
\begin{align}
\int_0^{\infty} \rmd T\,\phi(T) G_{g,2}^{(1)}(T) &= \sum_{F=1}^{\infty} g^F\,\left[\lim_{\ell\to 0} \sum_{t=1}^{\infty} \ell^{3F-2} \phi(\ell t) \mathcal{G}_{F,(1-\ell^2/4)/4}(t)\right]\\
&=\lim_{\ell\to 0} \sum_{t=1}^{\infty} \ell^{-2} \phi(\ell t)\sum_{F=1}^{\infty}(\ell^3 g)^F \mathcal{G}_{F,(1-\ell^2/4)/4}(t) \\
&=\lim_{\ell\to 0} \sum_{t=1}^{\infty} \ell^{-2} \phi(\ell t)\, \mathcal{G}_{\ell^3 g,x(\ell)}(t),\label{eq:G2fromdiscrete}
\end{align}
where the second equality is justified by the fact that that there exists a $c>0$ and $B>0$ such that
\begin{equation}
\left|\sum_{t=1}^{\infty} \ell^{3F-2} \phi(\ell t) \mathcal{G}_{F,(1-\ell^2/4)/4}(t)\right|  < c\, \nu_{F,2}(\metricmaps_{F,2}^{(1)}) < c\, B^{-F}
\end{equation}
for all sufficiently small $\ell$.

According to (\ref{eq:G2fromdiscrete}) we should parametrize $x=x(\ell)$ and $z=\ell^3 g$ and consider the limit $\ell\to 0$ for fixed (sufficiently small) $g>0$.
Equations (\ref{eq:eqx}) and (\ref{eq:eqz}) are solved to leading order in $\ell$ by
\begin{equation}\label{eq:sigmaanda}
\sigma(\ell) = 1 - 2\Sigma \ell + \mathcal{O}(\ell^2),\quad a(\ell) = \frac{\alpha-\Sigma}{\alpha+\Sigma} + \mathcal{O}(\ell),
\end{equation}
where $\Sigma = \sqrt{\frac{3}{2}\alpha^2-\frac{1}{8}}$ and $\alpha = \alpha(g)$ is the largest positive solution to (\ref{eq:eqalpha}).
In particular, one finds that 
\begin{equation}\label{eq:limbracket}
\lim_{\ell\to 0}\,\log\left( [T/\ell]_{\sigma(\ell),a(\ell)}\right) = \log\left(1-\frac{\alpha-\Sigma}{\alpha+\Sigma}e^{-2\Sigma T}\right)
\end{equation} 
for $T>0$.

Using (\ref{eq:discreteG2expr}), we find that (\ref{eq:G2fromdiscrete}) is equal to
\begin{align}
\lim_{\ell\to 0}& \sum_{t=1}^{\infty} \ell^{-2}\left( 3\phi(\ell t-\ell) +\phi(\ell t-3\ell)-3\phi(\ell t-2\ell)-\phi(\ell t) \right)  \log[t]_{\sigma(\ell),a(\ell)} \\
&= - \lim_{\ell\to 0} \sum_{t=1}^{\infty} \ell \phi'''(\ell t) \log[t]_{\sigma(\ell),a(\ell)}\\
&= -\int_0^\infty \rmd T\, \phi'''(T)\, \log\left(1-\frac{\alpha-\Sigma}{\alpha+\Sigma}e^{-2\Sigma T}\right)\\
&= \int_0^\infty \rmd T\, \phi(T)\, \partial_T^3\log\cfun(T).
\end{align}
Since by construction $G_{g,2}^{(1)}(T)$ is regular as $T\to 0$, we conclude that $G_{g,2}^{(1)}(T)=\partial_T^3\log\cfun(T)$.

Let $(\map,L)\in\metricmaps_{F,2}^{(1)}$ be a weighted map with $F\geq 2$ faces and let $e_1$ and $e_2$ be the edges incident to the marked vertices.
Removing the edges $e_1$ and $e_2$ and marking the vertex or vertices at their other endpoint, one obtains a weighted map $(\map',L')$, which either (A) has two marked bivalent vertices, i.e. $(\map',L')\in \metricmaps_{F,2}^{(2)}$, or (B) one marked 1-valent vertex, i.e. $(\map',L')\in \metricmaps_{F,1}^{(1)}$ (see figure \ref{fig:twopointrelation}).
This observation leads to the decomposition $G_{g,2}^{(1)}(T) = g e^{-T} + G_{g,2}^{(A)}(T)+G_{g,2}^{(B)}(T)$, where the $g e^{-T}$ is the contribution of the weighted maps with $F=1$ faces.

\begin{figure}
\begin{center}
\includegraphics[width=.5\linewidth]{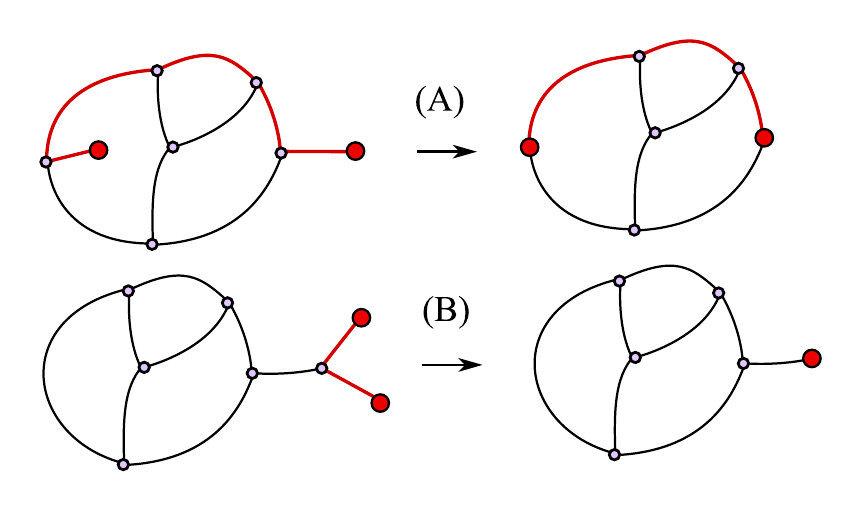}
\end{center}
\caption{Almost cubic weighted maps with two univalent marked vertices come in two types. Geodesics are shown in red.}%
\label{fig:twopointrelation}
\end{figure}

Notice that we can write
\begin{equation}\label{eq:twopointid}
G_{g,2}^{(A)}(T) = 4\int_0^{\infty}dL_0 e^{-L_0}\int_0^{\infty}dL_1 e^{-L_1} G_{g,2}^{(2)}(T-L_1-L_2),
\end{equation}
where the factor of $4$ comes from the fact that the edges $e_1$ and $e_2$ can be on either side of the marked vertices of $\map'$.
We can solve (\ref{eq:twopointid}) for $G_{g,2}^{(2)}(T)$, yielding
\begin{equation}
G_{g,2}^{(2)}(T) = \frac{1}{4}\left(1+\partial_T\right)^2G_{g,2}^{(A)}(T).
\end{equation}
On the other hand, it is clear that $G_{g,2}^{(B)}(T)$ will be of the form
\begin{equation}
G_{g,2}^{(B)}(T) \propto \int_0^{\infty}dL_0 e^{-L_0}\int_0^{\infty}dL_1 e^{-L_1} \delta(T-L_1-L_2) = T e^{-T}. 
\end{equation}
Therefore $(1+\partial_T)^2G_{g,2}^{(B)}(T)=0$ (and $(1+\partial_T)ge^{-T}=0$ of course) and (\ref{eq:G2-2formula}) follows.
\end{proof}
One can easily calculate
\begin{align}
\alpha &= \compactfrac{1}{2}-2 g-12 g^2-128 g^3-1680 g^4 + \mathcal{O}(g^5),\\
\Sigma &= \compactfrac{1}{2}-3 g-21 g^2-246 g^3-3453 g^4 + \mathcal{O}(g^5),
\end{align}
and therefore the first few terms of the two-point functions read
\begin{align}
G_{g,2}^{(1)}(T) &= g e^{-T}+g^2 \big(6 e^{-T} T+4 e^{-2 T}-4 e^{-T}\big)+g^3 \big(18 e^{-T}T^2+48 e^{-2 T} T +\nonumber\\
&\quad\quad +18 e^{-T} T+9 e^{-3 T}+40 e^{-2 T}-49e^{-T}\big)+\mathcal{O}(g^4),\\
G_{g,2}^{(2)}(T) &= g^2 e^{-2 T}+g^3 \left(12 e^{-2 T} T+9 e^{-3 T}-14 e^{-2 T}+9e^{-T}\right)+\mathcal{O}(g^4).
\end{align} 

Let us check some properties of $G_{g,2}^{(1)}(T)$ and $G_{g,2}^{(2)}(T)$.
First of all, their integrals are given by
\begin{align}
\int_0^{\infty}\rmd T\,G_{g,2}^{(1)}(T)&=-\left.\partial_T^2\log\cfun(T)\right|_{T=0}=\alpha^2-\Sigma^2 = \frac{g}{2\alpha}\nonumber\\
&=\frac{1}{24}\left(1-\cos\left(\textstyle \frac{2}{3}\arcsin(12\sqrt{3}g)\right)+\sqrt{3}\sin\left(\textstyle\frac{2}{3}\arcsin(12\sqrt{3}g)\right)\right),\label{eq:G2-1intgr}\\
\int_0^{\infty}\rmd T\,G_{g,2}^{(2)}(T)&=-\frac{1}{4} \left.(1+\partial_T)^2\partial_T^2\log\cfun(T)\right|_{T=0} = \frac{g (1-2\alpha)(5-6\alpha)}{32\alpha},\label{eq:G2-2intgr}
\end{align}
which can be checked to be generating functions of (\ref{eq:measure1}) and (\ref{eq:measure2}) respectively for $n=2$, confirming (\ref{eq:G2allT}).

On the other hand,
\begin{align}
G_{g,2}^{(1)}(0)&=\left.\partial_T^3\log\cfun(T)\right|_{T=0}=2\alpha(\alpha^2-\Sigma^2) = g, \label{eq:G2-1zero2}\\
G_{g,2}^{(2)}(0)&=\frac{1}{4} \left.(1+\partial_T)^2\partial_T^3\log\cfun(T)\right|_{T=0} = \frac{g (1-2\alpha)(6\alpha-1)}{16\alpha}.\label{eq:G2-2zero}
\end{align}
Clearly the former is in agreement with (\ref{eq:G2-1zero}). 
To see that (\ref{eq:G2-2zero}) agrees with (\ref{eq:G2-dzero}), notice that according to (\ref{eq:measure2}) we have $\nu_n(\metricmaps_n^{(2)}) = (3 g\frac{\partial}{\partial g}-5) \nu_n(\metricmaps_n^{(1)})$.
Using $\frac{\partial\alpha}{\partial g} = 4/(1-12\alpha^2)$, which follows from (\ref{eq:eqalpha}), we can easily check that 
\begin{equation}
\int_0^{\infty}\rmd T\,G_{g,2}^{(2)}(T) = \frac{1}{2}\left(3 g\frac{\partial}{\partial g}-5\right) G_{g,2}^{(2)}(0)
\end{equation}
is satisfied by the expressions (\ref{eq:G2-2intgr}) and (\ref{eq:G2-2zero}).

In section \ref{sec:eden} the expression for $G_{g,2}^{(2)}$ will be rederived using a different perspective, which will clarify the appearance of the differential operator $(1+\partial_T)$ and will allow us to derive an expression for $G_{g,2}^{(3)}(T)$.

\section{Three-point function}

\subsection{General expression}\label{sec:threepgen}
The steps in the previous section can be repeated to obtain the three-point function from its discrete counter part $\mathcal{G}_{z,x}(d_{12},d_{23},d_{31})$ for general planar maps.
It will be convenient to use the parametrization
\begin{equation}
D_{12}=S+T,\quad D_{23}=T+U,\quad D_{31}=U+S
\end{equation}
and to define the corresponding three-point functions
\begin{equation}
\bar{G}_{g,3}^{\bullet}(S,T,U) := 2G_{g,3}^{\bullet}(S+T,T+U,U+S).
\end{equation}

In \cite{fusy_three-point_2014} an explicit expression was found for 
\begin{equation}
\mathcal{G}_{z,x}(d_{12},d_{23},d_{31}) := \sum_{F=1}^{\infty}z^F \!\!\int_{\maps_{F,3}}\!\!\!\rmd\nu_{F,3,x}(\map,v_1,v_2,v_3)\delta_{d_{12},d_{\map}(v_1,v_2)}\delta_{d_{23},d_{\map}(v_2,v_3)}\delta_{d_{31},d_{\map}(v_3,v_1)}.
\end{equation}
It can be summarized as\footnote{The $1+\cdot$ is there to compensate for the fact that the irrelevant constant term present in $\mathcal{F}^{\mathrm{even}}_{z,x}(s,t,u)$ in \cite{fusy_three-point_2014} was chosen such that $\mathcal{F}^{\mathrm{even}}_{z,x}(0,0,0)=1$.}
\begin{align}
1+\sum_{s'=0}^s\sum_{t'=0}^t\sum_{u'=0}^u \mathcal{G}_{z,x}(s'+t',t'+u',u'+s') &= \mathcal{F}^{\mathrm{even}}_{z,x}(s,t,u),\label{eq:sumeven}\\
\sum_{s'=1}^s\sum_{t'=1}^t\sum_{u'=1}^u \mathcal{G}_{z,x}(s'+t'-1,t'+u'-1,u'+s'-1) &= \mathcal{F}^{\mathrm{odd}}_{z,x}(s,t,u),\label{eq:sumodd}
\end{align}
where 
\begin{compress}
\begin{align}
\mathcal{F}^{\mathrm{even}}_{z,x}(s,t,u)&:=\frac{[3]_{\sigma,a}([s+2]_{\sigma,a}[t+2]_{\sigma,a}[u+2]_{\sigma,a}[s+t+u+3]_{\sigma,a})^2}{[2]_{\sigma,a}^3[s+t+2]_{\sigma,a}[t+u+2]_{\sigma,a}[u+s+2]_{\sigma,a}[s+t+3]_{\sigma,a}[t+u+3]_{\sigma,a}[u+s+3]_{\sigma,a}},\\
\mathcal{F}^{\mathrm{odd}}_{z,x}(s,t,u)&:=\frac{\sigma^3[3]_{\sigma,a}(a[s]_{\sigma,1}[t]_{\sigma,1}[u]_{\sigma,1}[s+t+u+3]_{\sigma,a^2})^2}{[2]_{\sigma,a}^3[s+t+2]_{\sigma,a}[t+u+2]_{\sigma,a}[u+s+2]_{\sigma,a}[s+t+3]_{\sigma,a}[t+u+3]_{\sigma,a}[u+s+3]_{\sigma,a}},
\end{align}
\end{compress}
using the same notation $[t]_{\sigma,a}:=1-a\sigma^t$ as before.

\begin{theorem}\label{thm:threepoint}
The three-point functions $\bar{G}_{g,3}^{(1)}(S,T,U)$ and $\bar{G}_{g,3}^{(2)}(S,T,U)$ for weighted cubic maps are given by
\begin{align}
\bar{G}_{g,3}^{(1)}(S,T,U) =&\, \partial_S\partial_T\partial_U F_g(S,T,U)\\
\bar{G}_{g,3}^{(2)}(S,T,U) =&\,\compactfrac{1}{8}(1+\partial_S)(1+\partial_T)(1+\partial_U)\bar{G}_{g,3}^{(1)}(S,T,U) \nonumber\\
&+ G_{g,2}^{(2)}(T+U)\delta(S) + G_{g,2}^{(2)}(U+S)\delta(T) + G_{g,2}^{(2)}(S+T)\delta(U),\label{eq:G3-2expr}
\end{align}
where $G_{g,2}^{(2)}(T)$ is two-point function from Theorem \ref{thm:twopoint12} and 
\begin{align}
F_g(S,T,U) &:= F^{\mathrm{even}}_g(S,T,U) + F^{\mathrm{odd}}_g(S,T,U),\\
F^{\mathrm{even}}_g(S,T,U)&:= \frac{1}{\Sigma^2}\,\frac{\cfun^2(S)\cfun^2(T)\cfun^2(U)\cfun^2(S+T+U)}{\cfun^2(S+T)\cfun^2(T+U)\cfun^2(U+S)},\\
F^{\mathrm{odd}}_g(S,T,U)&:=\frac{(\alpha^2-\Sigma^2)^2}{\Sigma^2}\,\,  \frac{\sinh^2\Sigma S\sinh^2\Sigma T\sinh^2\Sigma U\,\cfunp(S+T+U)^2}{\cfun^2(S+T)\cfun^2(T+U)\cfun^2(U+S)},\\
\cfunp(T)&:=2\alpha \Sigma \cosh(\Sigma T)+(\alpha^2+\Sigma^2)\sinh(\Sigma T)
\end{align}
and $\cfun(T)$ as in (\ref{eq:Cgdef}).
\end{theorem}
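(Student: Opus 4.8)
The plan is to mirror the proof of Theorem \ref{thm:twopoint12}, first obtaining $\bar G^{(1)}_{g,3}$ as a scaling limit of the discrete three-point function $\mathcal{G}_{z,x}(d_{12},d_{23},d_{31})$ and then deducing $\bar G^{(2)}_{g,3}$ from it by an edge-removal argument. Applying Theorem \ref{thm:pushforward} with $n=3$ (where the prefactor $\ell(x)^{3F+2n-6}=\ell(x)^{3F}$ combines with $g^F$ into $(\ell^3g)^F$, so that, in contrast to the two-point case, no residual power of $\ell$ survives), I would first establish, for a smooth compactly supported test function $\phi$ on $\R_+^3$ and with $x(\ell)=(1-\ell^2/4)/4$, the identity
\[
\int\!\!\int\!\!\int \phi\, G^{(1)}_{g,3}\,\rmd D_{12}\rmd D_{23}\rmd D_{31} = \lim_{\ell\to 0}\sum_{d_{12},d_{23},d_{31}} \phi(\ell d_{12},\ell d_{23},\ell d_{31})\,\mathcal{G}_{\ell^3 g,\,x(\ell)}(d_{12},d_{23},d_{31}),
\]
the interchange of limit and summation being justified exactly as in (\ref{eq:G2fromdiscrete}) by a uniform geometric bound on $\nu_{F,3}(\metricmaps^{(1)}_{F,3})$, valid for $g$ sufficiently small and extended to $0<g\leq g^*$ by analyticity.

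The key to the representation (\ref{eq:sumeven})--(\ref{eq:sumodd}) is that the triple cumulative sum is the discrete analogue of a triple integral, so that $\mathcal{G}$ is recovered by the triple finite difference $\Delta_s\Delta_t\Delta_u$ of $\mathcal{F}^{\mathrm{even}}_{z,x}$ (resp. $\mathcal{F}^{\mathrm{odd}}_{z,x}$), splitting $\mathcal{G}$ into its even- and odd-perimeter parts. Setting $S=\ell s$, $T=\ell t$, $U=\ell u$ and using the parametrization (\ref{eq:sigmaanda}), I would extend the bracket asymptotics (\ref{eq:limbracket}) to
\[
[V/\ell]_{\sigma(\ell),a(\ell)} \to \frac{2}{\alpha+\Sigma}\,\rme^{-\Sigma V}\,\cfun(V),\qquad [V/\ell]_{\sigma(\ell),1}\to 2\,\rme^{-\Sigma V}\sinh(\Sigma V),
\]
together with $[V/\ell]_{\sigma,a^2}\to \tfrac{2}{(\alpha+\Sigma)^2}\rme^{-\Sigma V}\cfunp(V)$ and $[2]_{\sigma,a},[3]_{\sigma,a}\to \tfrac{2\Sigma}{\alpha+\Sigma}$. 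Substituting these into the explicit rational expressions for $\mathcal{F}^{\mathrm{even}}_{z,x}$ and $\mathcal{F}^{\mathrm{odd}}_{z,x}$, all exponential factors cancel between numerator and denominator and the constant prefactors collapse to $1/\Sigma^2$ and $(\alpha^2-\Sigma^2)^2/\Sigma^2$, yielding exactly $F^{\mathrm{even}}_g$ and $F^{\mathrm{odd}}_g$. Since $\Delta_s\Delta_t\Delta_u\to \ell^3\partial_S\partial_T\partial_U$, and since passing from the even (resp. odd) sublattice of $(d_{12},d_{23},d_{31})$ to the variables $(s,t,u)$ has Jacobian $2$, this factor of $2$ precisely converts $G^{(1)}_{g,3}$ into $\bar G^{(1)}_{g,3}$; summing the two parts gives $\bar G^{(1)}_{g,3}=\partial_S\partial_T\partial_U(F^{\mathrm{even}}_g+F^{\mathrm{odd}}_g)=\partial_S\partial_T\partial_U F_g$.

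For $\bar G^{(2)}_{g,3}$ I would follow the edge-removal argument of Theorem \ref{thm:twopoint12}. Removing from a weighted map in $\metricmaps^{(1)}_{F,3}$ the three edges $e_1,e_2,e_3$ incident to the univalent marked vertices and marking their other endpoints produces, generically, a weighted map in $\metricmaps^{(2)}_{F,3}$ with three distinct bivalent marks; since each $e_i$ may sit on either side of the corresponding mark and carries an independent unit-rate exponential length, the generic part of $\bar G^{(1)}_{g,3}$ is the triple convolution
\[
\bar G^{(1),\mathrm{gen}}_{g,3}(S,T,U)=8\!\int_0^\infty\!\!\rme^{-L_1}\!\int_0^\infty\!\!\rme^{-L_2}\!\int_0^\infty\!\!\rme^{-L_3}\,\bar G^{(2)}_{g,3}(S-L_1,T-L_2,U-L_3)\,\rmd L_1\rmd L_2\rmd L_3,
\]
which, because $\rme^{-V}$ is the Green's function of $(1+\partial_V)$, inverts to $\bar G^{(2)}_{g,3}=\tfrac18(1+\partial_S)(1+\partial_T)(1+\partial_U)\bar G^{(1),\mathrm{gen}}_{g,3}$. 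The remaining, degenerate configurations are those in which a branch point of the three geodesics coincides with a marked vertex, i.e.\ one of $S,T,U$ vanishes: when $S=0$ the vertex $v_1$ lies on the $v_2$--$v_3$ geodesic and the surviving data is precisely a two-point configuration with a bivalent mark, contributing $G^{(2)}_{g,2}(T+U)\delta(S)$, and cyclically. Collecting these, and noting that applying the operator to the full $\bar G^{(1)}_{g,3}$ rather than to its generic part only reshuffles these boundary contributions, gives (\ref{eq:G3-2expr}).

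The bracket asymptotics and the cancellation of exponentials are mechanical once (\ref{eq:sigmaanda}) is in hand; I expect the genuine difficulties to be twofold. First, in the $\bar G^{(1)}$ argument one must justify the passage from the discrete finite-difference identity to the derivatives $\partial_S\partial_T\partial_U$ uniformly enough to survive integration against $\phi$, and correctly account for the interplay between the Jacobian $2$ and the factor-$\tfrac12$ density of the even/odd sublattice—getting this bookkeeping wrong shifts the overall normalization. Second, and more seriously, the $\delta$-terms in $\bar G^{(2)}_{g,3}$ require a careful enumeration of the degenerate maps under edge removal and the precise identification of the surviving weight with $G^{(2)}_{g,2}$; this is the step most prone to missing or over-counting boundary contributions, and is where I would concentrate the effort.
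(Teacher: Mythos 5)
Your derivation of $\bar{G}^{(1)}_{g,3}$ is the paper's own argument: the same scaling $x=x(\ell)$, $z=\ell^3 g$, the same even/odd split of the discrete sum into (\ref{eq:sumeven})--(\ref{eq:sumodd}), and your bracket asymptotics (which do reproduce $\cfun$, $2e^{-\Sigma V}\sinh\Sigma V$, $\cfunp$ and the prefactors $1/\Sigma^2$ and $(\alpha^2-\Sigma^2)^2/\Sigma^2$) are exactly the limits (\ref{eq:evenlim})--(\ref{eq:oddlim}) used there. The paper organizes the passage to the continuum through the cumulative identity $\int_0^S\int_0^T\int_0^U\bar{G}^{(1)}_{g,3}=F_g-1$ and then differentiates the smooth limit, rather than tracking finite differences; that difference is cosmetic, and your worry about the interplay between the Jacobian $2$ and the doubled density of each parity sublattice in $(s,t,u)$-coordinates is precisely what makes the normalization $\bar{G}=2G$ come out right.

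For $\bar{G}^{(2)}_{g,3}$ the strategy is again the paper's edge-removal argument, and your direct combinatorial identification of the $\delta$-terms with the saturated-triangle-inequality configurations is a legitimate alternative to the paper's derivation of them as boundary terms of the convolution inversion (via $\bar{G}^{(1)}_{g,3}(S,T,0)=2G^{(1)}_{g,2}(S+T)$ and $\frac14(1+\partial_T)^2G^{(1)}_{g,2}=G^{(2)}_{g,2}$). There is, however, one concrete gap: you have misidentified the non-generic configurations of the edge removal. They are \emph{not} the configurations with one of $S,T,U$ equal to zero --- those are properties of the target maps in $\metricmaps^{(2)}_{F,3}$ and are exactly what the $\delta$-terms encode. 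Removing the three pendant edges fails to produce an element of $\metricmaps^{(2)}_{F,3}$ precisely when two of the removed edges share a cubic endpoint (and for the single $F=1$ star map), and these configurations occur at generic $S,T,U>0$: if $e_1$ and $e_2$ meet at a vertex then $S=L(e_1)$, $T=L(e_2)$ and their contribution is $2e^{-S-T}G^{(1)}_{g,2}(U)$, cyclically, plus $2ge^{-S-T-U}$ for $F=1$. Your claim that replacing $\bar{G}^{(1),\mathrm{gen}}_{g,3}$ by the full $\bar{G}^{(1)}_{g,3}$ under the operator ``only reshuffles these boundary contributions'' is therefore unsupported as stated (the difference between the two is supported in the bulk, not on the boundary). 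What actually saves the formula is that every one of these extra terms carries a pure factor $e^{-S}$, $e^{-T}$ or $e^{-U}$ and is annihilated by the corresponding $(1+\partial)$; you need to enumerate these cases and make that cancellation explicit, both for the smooth part and when evaluating the coefficients of the $\delta$-terms.
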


\begin{proof}
Setting $x=x(\ell)=(1-\ell^2/4)/4$, $z=\ell^3 g$, $s=S/\ell$, $t=T/\ell$ and $u=U/\ell$, one can check that we have the following limits for $S,T,U,g$ fixed:
\begin{align}
\lim_{\ell\to 0}\mathcal{F}^{\mathrm{even}}_{\ell^3g,x(\ell)}\left(\lfloor S/\ell\rfloor,\lfloor T/\ell\rfloor,\lfloor U/\ell\rfloor\right) &= F^{\mathrm{even}}_g(S,T,U),\label{eq:evenlim}\\
\lim_{\ell\to 0}\mathcal{F}^{\mathrm{odd}}_{\ell^3g,x(\ell)}\left(\lfloor S/\ell\rfloor,\lfloor T/\ell\rfloor,\lfloor U/\ell\rfloor\right) &= F^{\mathrm{odd}}_g(S,T,U),\label{eq:oddlim}.
\end{align}
By construction we have the identity
\begin{compress}
\begin{align}
&\int_0^T\rmd T'\int_0^S\rmd S'\int_0^U\rmd U' \bar{G}_{g,3}^{(1)}(S',T',U') = \int_0^{\infty}\!\!\rmd D_{12}\int_0^{\infty}\!\!\rmd D_{23}\int_0^{\infty}\!\!\rmd D_{31} G_{g,3}^{(1)}(D_{12},D_{23},D_{31}) \nonumber\\
&\quad\times \theta(S - (D_{12}+D_{31}-D_{23})/2)\,\theta(T - (D_{23}+D_{12}-D_{31})/2)\,\theta(U - (D_{31}+D_{23}-D_{12})/2).
\end{align}
\end{compress}
Application of Theorem \ref{thm:pushforward} shows that the right-hand side can be expressed in terms of the discrete three-point function $\mathcal{G}_{z,x}(d_{12},d_{23},d_{31})$ as
\begin{compress}
\begin{align}
&\int_0^T\rmd T'\int_0^S\rmd S'\int_0^U\rmd U' \bar{G}_{g,3}^{(1)}(S',T',U') = \lim_{\ell\to 0} \sum_{d_{12}=1}^{\infty} \sum_{d_{23}=1}^{\infty}\sum_{d_{31}=1}^{\infty}\mathcal{G}_{\ell^3g,x(\ell)}(d_{12},d_{23},d_{31})\nonumber\\
&\quad\times \theta(S' - \ell(d_{12}+d_{31}-d_{23})/2)\,\theta(T' - \ell(d_{23}+d_{12}-d_{31})/2)\, \theta(U' - \ell(d_{31}+d_{23}-d_{12})/2).
\end{align}
\end{compress}
If we decompose the sum according to the parity of $d_{12}+d_{23}+d_{31}$, we get exactly sums of the form (\ref{eq:sumeven}) and (\ref{eq:sumodd}), up to boundary terms which vanish as $\ell\to 0$.
Therefore, using (\ref{eq:evenlim}) and (\ref{eq:oddlim}), we find that
\begin{equation}\label{eq:G3primitive}
\int_0^T\rmd T'\int_0^S\rmd S'\int_0^U\rmd U' \bar{G}_{g,3}^{(1)}(S',T',U') = F_g(S,T,U) - 1.
\end{equation}
Since $F_g(S,T,U)$ is a smooth function on $\R_+^3$ which takes value $1$ on the boundary, the three-point function $\bar{G}_{g,3}^{(1)}(S,T,U)$ is non-singular and can be obtained by differentiating (\ref{eq:G3primitive}), i.e.
\begin{equation}
\bar{G}_{g,3}^{(1)}(S,T,U) = \partial_S\partial_T\partial_U F_g(S,T,U). 
\end{equation}

As for the two-point function we will construct $\bar{G}_{g,3}^{(2)}(S,T,U)$ by considering the operation of removing the edges $e_1,e_2,e_3$ incident to the marked vertices of a weighted map $(\map,L)\in\metricmaps_{F,3}^{(1)}$.
Assuming $F\geq 2$, we distinguish two cases: either (A) no pair of the three edges shares a vertex, or (B) exactly one pair of edges shares a vertex.
It is not hard to see that the three-point function $\bar{G}_{g,3}^{(1)}(S,T,U)$ decomposes accordingly as
\begin{equation}\label{eq:G3decomp}
\bar{G}_{g,3}^{(1)}(S,T,U) = \bar{G}_{g,3}^{(A)}(S,T,U)+\bar{G}_{g,3}^{(B)}(S,T,U) + 2 g e^{-S-T-U},
\end{equation}
where $2 g e^{-S-T-U}$ is the contribution for $F=1$.
Suppose in case (B) that $e_1$ and $e_2$ share a vertex $v$, then $S=L(e_1)$, $T=L(e_2)$ and after removal of the edges $e_1$ and $e_2$ the remaining weighted map has two 1-valent vertices separated by a distance $U$.
Hence, we can express $\bar{G}_{g,3}^{(B)}(S,T,U)$ in terms of the two-point functions as
\begin{equation}\label{eq:G3B}
\bar{G}_{g,3}^{(B)}(S,T,U) = 2 e^{-T-U}G_{g,2}^{(1)}(S)+2 e^{-U-S}G_{g,2}^{(1)}(T)+2 e^{-S-T}G_{g,2}^{(1)}(U)-6 g e^{-S-T-U},
\end{equation}
where the last term is necessary to remove the contribution for $F=1$.
 
On the other hand, similarly to (\ref{eq:twopointid}), we have
\begin{equation}\label{eq:G3Aintegral}
\bar{G}_{g,3}^{(A)}(S,T,U) = 8 \int_0^{\infty}\!\!\rmd L_1e^{-L_1}\int_0^{\infty}\!\!\rmd L_2e^{-L_2}\int_0^{\infty}\!\!\rmd L_3e^{-L_3}\, \bar{G}_{g,3}^{(2)}(S-L_1,T-L_2,U-L_3).
\end{equation}
Combining (\ref{eq:G3decomp}) with (\ref{eq:G3B}) and (\ref{eq:G2-1zero2}) shows that $\bar{G}_{g,3}^{(A)}(S,0,0)=0$ and therefore (\ref{eq:G3Aintegral}) is solved by
\begin{compress}
\begin{align}
\bar{G}_{g,3}^{(2)}(S,T,U) =&\, \compactfrac{1}{8}(1+\partial_S)(1+\partial_T)(1+\partial_U)\bar{G}_{g,3}^{(A)}(S,T,U)\,+\, \compactfrac{1}{8}(1+\partial_T)(1+\partial_U)\bar{G}_{g,3}^{(A)}(0,T,U)\delta(S) \nonumber\\
&+ \compactfrac{1}{8}(1+\partial_S)(1+\partial_U)\bar{G}_{g,3}^{(A)}(S,0,U)\delta(T)\,+\, \compactfrac{1}{8}(1+\partial_S)(1+\partial_T)\bar{G}_{g,3}^{(A)}(S,T,0)\delta(U).
\end{align}
\end{compress}
In this expression we may replace $\bar{G}_{g,3}^{(A)}(\cdot,\cdot,\cdot)$ with $\bar{G}_{g,3}^{(1)}(\cdot,\cdot,\cdot)$, because the other terms in (\ref{eq:G3decomp}) are killed by the derivatives.
The desired expression (\ref{eq:G3-2expr}) then follows from setting $\bar{G}_{g,3}^{(1)}(S,T,0) = 2 G_{g,2}^{(1)}(S+T)$, which follows from $\partial_S\partial_T\partial_U F_g^{\mathrm{odd}}(S,T,U)|_{U=0}=0$ and 
\begin{equation}
\partial_S\partial_T\partial_U F_g^{\mathrm{even}}(S,T,U)|_{U=0} = \partial_S\partial_T\partial_U\left. \frac{\cfun^2(S+T+U)}{\cfun^2(S+T)}\right|_{U=0} = 2\partial_S^3 \log\cfun(S+T).
\end{equation}
\end{proof}

\begin{remark}
It is not hard to interpret the split of the 3-point functions into even and odd parts.
Consider a planar map $\map\in\maps_{F,3}$ with pair-wise distances $(d_{12},d_{23},d_{31})$ between the marked vertices.
Two scenarios are possible depending on whether a vertex $v\in\vertices(\map)$ exists such that $d_{ij}=d_{\map}(v_i,v)+d_{\map}(v_j,v)$.
If such a vertex exists, then $d_{12}+d_{23}+d_{31}$ is necessarily even.
Otherwise, in the limit in which $\map$ is very large, $d_{12}+d_{23}+d_{31}$ is even or odd with equal probability.
Translating to the weighted maps, where geodesics are almost surely unique, this means that weighted maps $(\map,L)\in\metricmaps_{F,3}^{(1)}$ contribute equally to $F^{\mathrm{even}}_g(S,T,U)$ and $F^{\mathrm{odd}}_g(S,T,U)$ except when the three geodesics meet in a point (the geodesics are ''completely confluent'').
In particular, the three-point function $\bar{G}_{g,3,\mathrm{conf}}^{(1)}(S,T,U)$ of weighted maps with completely confluent geodesics is given by
\begin{equation}\label{eq:G3conf}
\bar{G}_{g,3,\mathrm{conf}}^{(1)}(S,T,U)=\partial_S\partial_T\partial_U(F^{\mathrm{even}}_g(S,T,U) - F^{\mathrm{odd}}_g(S,T,U)).
\end{equation}
\end{remark}

Let us check some integrals of the three-point functions in Theorem \ref{thm:threepoint}. 
First of all
\begin{equation}
\int_0^{\infty} \rmd S\int_0^{\infty} \rmd T\int_0^{\infty} \rmd U \bar{G}_{g,3}^{(1)}(S,T,U) = F_g(S,T,U)\big|_{S=0}^{\infty}\big|_{T=0}^{\infty}\big|_{U=0}^{\infty} = \frac{g}{4\alpha \Sigma^{2}},
\end{equation}
which is precisely the generating function for $\nu_{F,3}(\metricmaps_{F,3}^{(1)})$ in (\ref{eq:measure1}).
The measure of the subset of $\metricmaps_{F,3}^{(2)}$ for which strict triangle inequalities hold is given by 
\begin{align}
\int_0^{\infty} \rmd S\int_0^{\infty} \rmd T\int_0^{\infty} \rmd U \compactfrac{1}{8}(1+\partial_S)(1+\partial_T)&(1+\partial_U)\bar{G}_{g,3}^{(1)}(S,T,U) \nonumber
\\&= g \frac{36 \alpha^2-4\alpha+24g-1}{64\alpha \Sigma^2}-\compactfrac{3}{4}(\alpha-\Sigma),
\end{align}
while for the subset with saturated triangle inequalities (i.e. the second line of (\ref{eq:G3-2expr})) it is
\begin{equation}
3 \int_0^{\infty}\rmd S\int_0^{\infty}\rmd T\,G_{g,2}^{(2)}(S+T) = g\frac{3\alpha-3}{4\alpha} +\compactfrac{3}{4}(\alpha-\Sigma).
\end{equation}
Adding these together we find
\begin{equation}
\int_0^{\infty} \rmd S\int_0^{\infty} \rmd T\int_0^{\infty} \rmd U \bar{G}_{g,3}^{(2)}(S,T,U) = g \frac{(1-2\alpha)(5+6\alpha-24\alpha^2)}{64\alpha \Sigma^2},
\end{equation}
which is the generating function for $\nu_{F,3}(\metricmaps_{F,3}^{(2)})$ in (\ref{eq:measure2}).

\subsection{Coincidence limit}\label{sec:coinc}

We already observed that $\bar{G}_{g,3}^{(1)}(S,T,0) = 2 G_{g,2}^{(1)}(S+T)$, which has the clear interpretation that the vertex $v_3$ must be located somewhere along the geodesic from $v_1$ to $v_2$.
A slightly tedious calculation, which we have included in the appendix, shows that
\begin{equation}
\partial_S\partial_U\bar{G}_{g,3}^{(1)}(S,T,U)\big|_{S=U=0} = \bar{G}_{g,3}^{(1)}(0,T,0) = 2 G_{g,2}^{(1)}(T).
\end{equation}
This in turn allows us to evaluate the limit $S,U\to 0$ (but $S,T,U\neq 0$) of $\bar{G}^{(2)}_{g,3}(S,T,U)$ as given in (\ref{eq:G3-2expr}), namely (for $T>0$)
\begin{align}
\lim_{\substack{S,U\to 0\\S,U\neq 0}}\bar{G}^{(2)}_{g,3}(S,T,U) &=  \compactfrac{1}{8}(1+\partial_S)(1+\partial_T)(1+\partial_U)\bar{G}_{g,3}^{(1)}(S,T,U)\big|_{S=U=0}\nonumber\\
&= \compactfrac{1}{8}(1+\partial_T)[(1+\partial_S)+(1+\partial_U)+\partial_S\partial_U-1]\bar{G}_{g,3}^{(1)}(S,T,U)\big|_{S=U=0}\nonumber\\
&=\compactfrac{1}{2}(1+\partial_T)^2G_{g,2}^{(1)}(T) = 2 G_{g,2}^{(2)}(T)\quad\quad(T>0).\label{eq:G3-2limit}
\end{align}
which, despite its simplicity, is actually not easily interpreted from the geometric point of view.
In particular the two-point function $G_{g,2}^{(2)}(T)$ does \emph{not} arise from configurations with one of the vertices lying on the geodesic connecting the other two, since $S,U\neq 0$ implies that we are enforcing strict triangle inequalities.
Instead we will see now that the important contributions are of the form of those in figure \ref{fig:threepointlimit}.

\begin{figure}[t]
\begin{center}
\includegraphics[width=.7\linewidth]{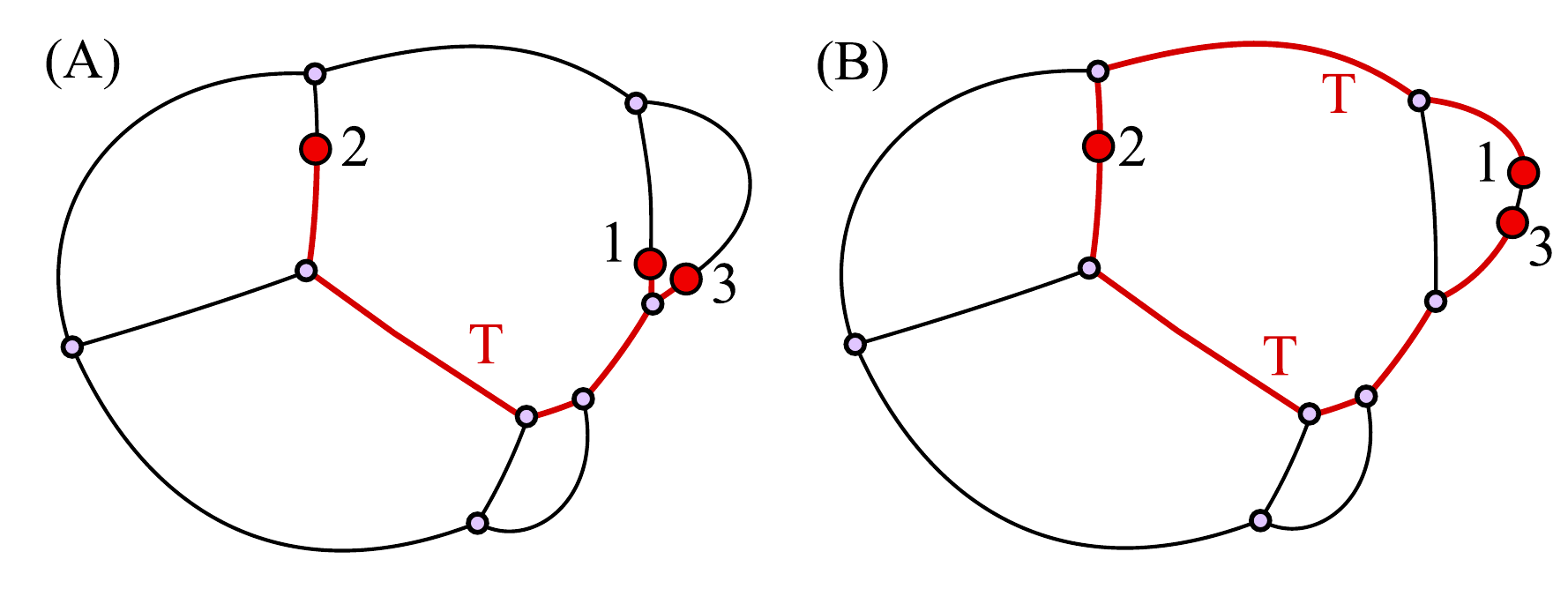}
\end{center}
\caption{Two possible ways the vertices $v_1$ and $v_3$ can be infinitesimally close without violating strict triangle inequalities in a weighted map $(\map,L,v_1,v_2,v_3)\in\metricmaps_{F,3}^{(2)}$.
The two geodesics of length $T$ connecting $v_1$ and $v_3$ to $v_2$ are colored red.}%
\label{fig:threepointlimit}
\end{figure}

Let us consider a weighted map $(\map,L)\in\metricmaps_{F,3}^{(2)}$ with geodesic distances corresponding to $S,T,U$, where both $S$ and $U$ are infinitesimally small but positive.
This corresponds to a situation where two of the marked vertices, say $v_1$ and $v_3$, are infinitesimally close, but strict triangle inequalities with the other vertex $v_2$ are maintained.
Two types of configurations of the three points are possible (as is illustrated in figure \ref{fig:threepointlimit}): (A) either $v_1$ and $v_3$ are adjacent to a cubic vertex $v$ and $d_{X_{\map,L}}(v_2,v)=T$, or (B) there is an (infinitesimally short) edge $e$ connecting $v_1$ and $v_3$ containing a local maximum of the distance function $d_{X_{\map,L}}(v_2,\cdot)$.
In case (A), merging $v_1$ and $v_3$ leads to an almost cubic weighted map with both a cubic and a bivalent marked vertex separated by a distance $T$.
Moreover, each such weighted map can be obtained in precisely two ways, since the same weighted map would have been obtained if $v_1$ and $v_3$ were interchanged.
In case (B), merging $v_1$ and $v_2$ leads to an almost cubic weighted map with two bivalent vertices separated by a distance $T$ that is conditioned to be realized by two geodesics.
In fact we may identify
\begin{align}
\lim_{S,U\to 0}\bar{G}^{(2)}_{g,3}(S,T,U) &= 2G_{g,2}^{(2,3)}(T) + 2G_{g,2}^{\text{max}}(T), \label{eq:G3-2limit2} \\
G_{F,2}^{\text{max}}(T) &:= \int_{\metricmaps_{F,1}^{(2)}} \rmd\nu_{F,1}(\map,L,v_1) \sum_{x\in\max(\map,L,v_1)} \delta(T-d_{X_{\map,L}}(v_1,x))
\end{align}
where $\max(\map,L,v_1)\subset X_{\map,L}$ is the finite set of local maxima of $d_{X_{\map,L}}(v_1,\cdot)$.

Recall that the two-point function $G_{F,2}^{(2)}(T)$ can be identified with the geometric two-point function $G^{\mathrm{geom}}_{F,2}(T)$ defined in (\ref{eq:geomtwopoint}) for $F\geq 3$.
In fact, we can represent $G_{F,2}^{(2)}(T)$ as 
\begin{equation}
G_{F,2}^{(2)}(T) = \int_{\metricmaps_{F,1}^{(2)}}\rmd\nu_{F,1}(\map,L,v_1)\,|\{x\in X_{\map,L} : d_{X_{\map,L}}(v_1,x) = T\}|.
\end{equation}
For a fixed weighted map the integrand only changes with increasing $T$ when either a vertex ($+1$) or a local maximum ($-2$) is encountered.
Therefore, for $T>0$,
\begin{equation}
\partial_TG_{F,2}^{(2)}(T) =\! \int_{\metricmaps_{F,1}^{(2)}}\!\!\!\rmd\nu_{F,1}(\map,L,v_1) \left(\sum_{v\in\vertices(\map)}\!\!\delta(T-d_{X_{\map,L}}(v_1,v))-2\!\!\!\!\!\!\!\!\!\!\!\sum_{x\in\max(\map,L,v_1)}\!\!\!\!\!\!\!\!\!\!\! \delta(T-d_{X_{\map,L}}(v_1,x)\right), 
\end{equation}
which implies the identity
\begin{equation}
\partial_TG_{g,2}^{(2)}(T) = G_{g,2}^{(2,3)}(T) - 2G_{g,2}^{\text{max}}(T).
\end{equation}
Combined with (\ref{eq:G3-2limit}) and (\ref{eq:G3-2limit2}) this leads to
\begin{align}
G_{g,2}^{(2,3)}(T) &= \compactfrac{1}{3}(2+\partial_T)G_{g,2}^{(2)}(T) = \compactfrac{1}{12}(2+\partial_T)(1+\partial_T)^2\partial_T^3\log\cfun(T), \label{eq:G2-23expr}\\
G_{g,2}^{\text{max}}(T) &= \compactfrac{1}{3}(1-\partial_T)G_{g,2}^{(2)}(T) = \compactfrac{1}{12}(1-\partial_T)(1+\partial_T)^2\partial_T^3\log\cfun(T). \label{eq:G2maxexpr}
\end{align}
We have still not managed to compute the two-point function for purely cubic weighted maps, but we are getting close and a clear pattern is emerging in the relation between the two-point functions.

\subsection{Collinear limit}\label{sec:coll}
\begin{figure}[t]
\begin{center}
\includegraphics[width=.35\linewidth]{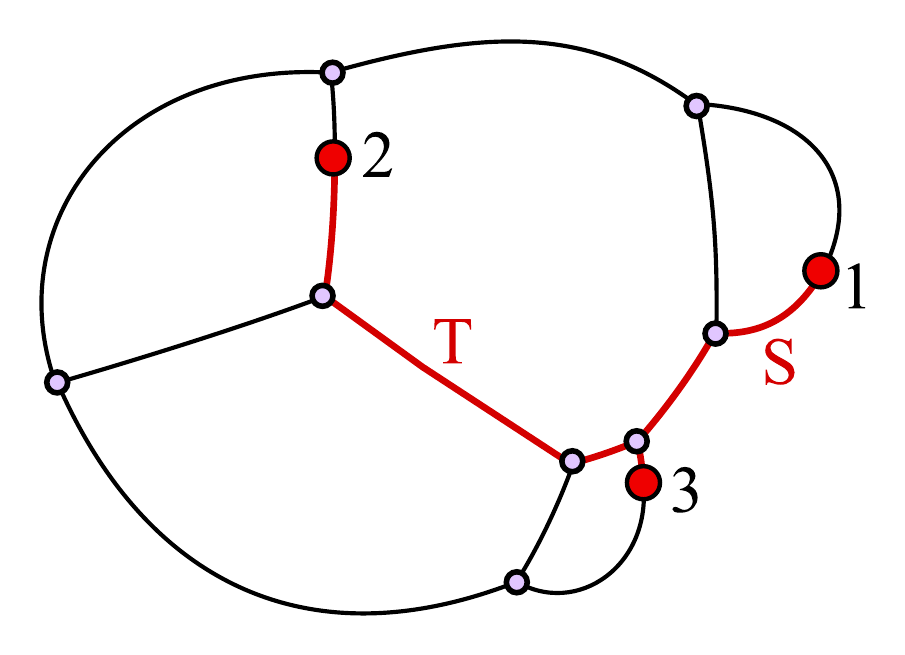}
\end{center}
\caption{If the geodesics connecting the three marked vertices of an almost cubic weighted map $(\map,L,v_1,v_2,v_3)\in\metricmaps_{F,3}^{(2)}$ are completely confluent and $U$ is infinitesimal (but nonzero), $v_3$ must be close to a cubic vertex on the geodesic from $v_1$ to $v_2$ but $v_3$ cannot be on that geodesic.}%
\label{fig:vertexongeodesic}
\end{figure}
Next let us consider just the limit $U\to 0$ of $\bar{G}_{g,3}^{(2)}(S,T,U)$ while $S,T>0$ are kept fixed.
This limit is of particular interest when we restrict to the completely confluent part of $\bar{G}_{g,3}^{(2)}(S,T,U)$, i.e.
\begin{equation}\label{eq:G3-2conf}
\bar{G}_{g,3,\mathrm{conf}}^{(2)}(S,T,U):=\compactfrac{1}{8}(1+\partial_S)(1+\partial_T)(1+\partial_U)\bar{G}_{g,3,\mathrm{conf}}^{(1)}(S,T,U),\quad\quad(S,T,U>0)
\end{equation}
with $\bar{G}_{g,3,\mathrm{conf}}^{(1)}(S,T,U)$ as in (\ref{eq:G3conf}).
In the limit $U\to 0$ the weighted maps $(\map,v_1,v_2,v_3)$ contributing to this three-point function must have their vertex $v_3$ infinitesimally close to a vertex $v$ that lies on a geodesic from $v_1$ to $v_2$ with distance $S$ to $v_1$ and $T$ to $v_2$ (see figure \ref{fig:vertexongeodesic}).
Hence, by integrating over $S$ while keeping $S+T$ fixed, one finds that 
\begin{equation}
G_{g,2,\mathrm{vert}}^{(2)}(T) := \lim_{U\to 0}\int_0^T\rmd S\,\bar{G}_{g,3,\mathrm{conf}}^{(2)}(S,T-S,U)
\end{equation}
gives the two-point function for weighted maps with two marked bivalent vertices connected by a geodesic of length $T$ and a marked cubic vertex on the geodesic.
The expected number of cubic vertices $\langle V \rangle_{F,T}$ on a geodesic of length $T$ in a random weighted map with $F$ faces is then given by
\begin{equation}
\langle V \rangle_{F,T} = \frac{G_{F,2,\mathrm{vert}}^{(2)}(T)}{G_{F,2}^{(2)}(T)}
\end{equation}
and, of course, the expected number of edges in the geodesic by $\langle V \rangle_{F,T}+1$.
Moreover, since the number of edges in the geodesic from $v_1$ to $v_2$ gives an upper bound for the graph distance $d_{\map}(v_1,v_2)$, the expected graph distance satisfies the inequality 
\begin{equation}\label{eq:graphdistbound}
\langle d_{\map}(v_1,v_2)\rangle_{F,T} < \langle V \rangle_{F,T}+1.
\end{equation}

Using (\ref{eq:G3-2conf}) and (\ref{eq:G3conf}) one finds
\begin{align}
\lim_{U\to 0}\bar{G}_{g,3,\mathrm{conf}}^{(2)}(S,T-S,U) &= G_{g,2}^{(2)}(T) + \compactfrac{1}{8}(1+\partial_S+\partial_T)(1+\partial_T)\partial_U\bar{G}_{g,3,\mathrm{conf}}^{(1)}(S,T-S,U)\big|_{U=0}
\end{align}
and therefore
\begin{align}
G_{g,2,\mathrm{vert}}^{(2)}(T) =&\, T G_{g,2}^{(2)}(T) - \compactfrac{1}{2}(1+\partial_T)\partial_TG_{g,2}^{(1)}(T) \nonumber\\
& +  \compactfrac{1}{8}(1+\partial_T)^2\int_0^T\rmd S\, \partial_U\bar{G}_{g,3,\mathrm{conf}}^{(1)}(S,T-S,U)\big|_{U=0}.
\end{align}
A tedious calculation shows that the latter integral is given by
\begin{align}
\int_0^T\rmd S\, \partial_U\bar{G}_{g,3,\mathrm{conf}}^{(1)}(S,T-S,U)\big|_{U=0} =&\, 4 \alpha T G_{g,2}^{(1)}(T) + 8 \log\left(\frac{\cfun(T)}{\Sigma}\right) \partial_T^3\log\left(\frac{\cfunp(T)}{\cfun(T)}\right) \nonumber\\
& - 4 \sinh(\Sigma T) \cfunp(T)\partial_T\left(\frac{\partial_T^2\log\cfunp(T)}{\cfun(T)^2}\right).
\end{align}
The explicit expression for $G_{g,2,\mathrm{vert}}^{(2)}(T)$ is perhaps of little interest, but it will allow us in Section \ref{sec:scaling} to obtain a simple expression for the scaling limit of the expected number of vertices on a geodesic.

\section{Relation to the Eden model on random triangulations}\label{sec:eden}

Given a planar map $\map$ we define an \emph{exploration process of length $k$} to be a sequence $\exploration=((\mathcal{V}_t,\mathcal{E}_t))_{t=0}^k$ of pairs consisting of subsets of \emph{explored vertices} $\mathcal{V}_t\subset\mathcal{V}(\map)$ and subsets of \emph{explored edges} $\mathcal{E}_t\subset\mathcal{E}(\map)$ satisfying the following properties:
\begin{enumerate}
\item Both endpoints of each edge $e\in\mathcal{E}_t$ are contained in $\mathcal{V}_t$.
\item For each $t=1,\ldots,k$, there exists an edge $e\notin \mathcal{E}_{t-1}$ such that at least one of its endpoints is in $\mathcal{V}_{t-1}$ and $\mathcal{E}_t = \mathcal{E}_{t-1}\cup\{e\}$.
\end{enumerate}
For each $t=0,\ldots,k$ we define the subset of \emph{frontier edges} $\vec{\mathcal{F}}_t\subset\vec{\edges}(\map)$ to contain the directed edges that start at vertices in $\mathcal{V}_t$, but of which the unoriented versions are not in $\mathcal{E}_t$ (see figure \ref{fig:explorationt}).
An exploration process is \emph{complete} if $\mathcal{E}_k = \mathcal{E}(\map)$.
An exploration process \emph{started at a vertex} $v$ is an exploration process with $\mathcal{V}_0 = \{v\}$ and $\mathcal{E}_0=\emptyset$.
Finally, by an exploration process of $\map$ \emph{started at an edge $e$} we mean an exploration process of $\map_1$ started at vertex $v$, where $\map_1$ is obtained from $\map$ by inserting a bivalent vertex $v$ in edge $e$ (see figure \ref{fig:explorationt} for an example).

\begin{figure}[t]
\begin{center}
\includegraphics[width=.9\linewidth]{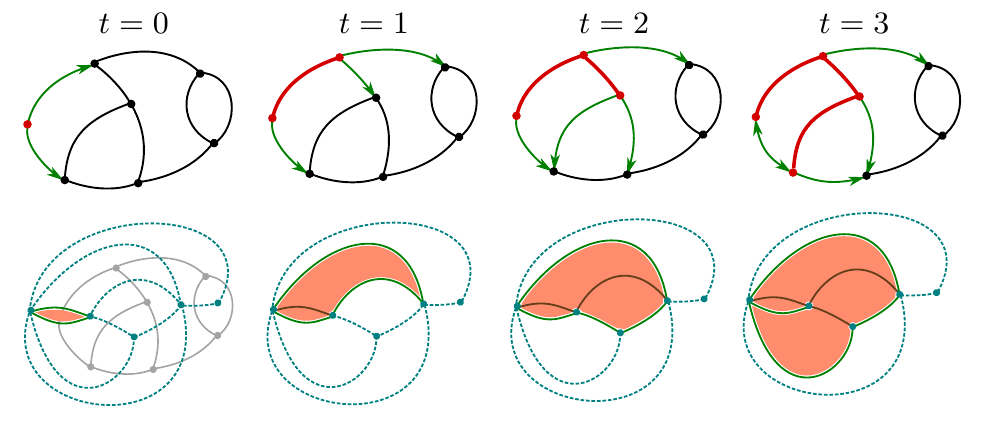}
\end{center}
\caption{The top row shows an example of an Eden model exploration process of length 3 started at an edge of a cubic planar map.
Explored edges and vertices are colored red and the green arrowed edges represent the oriented frontier edges.
The bottom row shows the same exploration process from the point of view of the dual triangulation. }%
\label{fig:explorationt}
\end{figure}

Clearly, an exploration process is fully determined by an initial pair $(\mathcal{V}_0,\mathcal{E}_0)$ and a (deterministic or probabilistic) algorithm that selects a frontier edge at each step $t$.
The \emph{Eden model exploration process} corresponds to the probabilistic algorithm that selects a frontier edge uniformly at random from $\vec{\mathcal{F}}_t$ at each step $t$.
One can either consider a complete Eden model process or stop the process according to some algorithm.
Of particular interest is the Eden model exploration process with \emph{stopping weight $\weight$}, where at each time $t$ the process is stopped with probability $\weight/(|\vec{\mathcal{F}}_t|+\weight)$.
Instead of just choosing a frontier edge uniformly at random, this comes down to randomly selecting either a frontier edge, each with weight $1$, or to stop with weight $\weight$.

It is well-known that the Eden model exploration process on a planar map $\map$ is closely related to the geodesic balls in $\map$ when the edge lengths are taken to be exponential random variables.
To be precise, let $\vertices_0\subset\vertices(\map)$ be a subset of the vertices of $\map$, and let $L:\edges(\map)\to\R_+$ be random edge lengths as in (\ref{eq:lengthmeasure}).
Define the \emph{max-distance} $d_{X_{\map,L}}(e,\vertices_0)$ of an edge $e\in\edges(\map)$ to the set $\vertices_0$ to be
\begin{equation}
d_{X_{\map,L}}(e,\vertices_0) = \max_{x\in e} \min_{v\in\vertices_0} d_{X_{\map,L}}(x,v),
\end{equation}
where by $x\in e$ we mean that $x\in X_{\map,L}$ is contained in the interval associated to the edge $e$.
Almost surely the values $d_{X_{\map,L}}(e,\vertices_0)$, $e\in\edges(\map)$, are distinct and therefore we can uniquely identify a complete exploration process $\exploration_L$ by setting $\edges_n$, $0\leq n\leq |\edges(\map)|$, equal to the set of $n$ edges which have smallest max-distance $d_{X_{\map,L}}(\cdot,\vertices_0)$.
Moreover, one obtains an exploration process $\exploration_{L,\weight}$ of random length $k$ by sampling in addition an exponentially distributed random variable $T$ with expectation value $1/\weight$, and stopping $\exploration_{L,\weight}$ after $k$ steps where $k$ is the number of edges with max-distance $d_{X_{\map,L}}(\cdot,\vertices_0)$ smaller than $T$.

\begin{lemma}\label{thm:edenmetric}
The random exploration process $\exploration_{L,\weight}$ described above is identical in law to the Eden model exploration process with stopping weight $\weight$ started at $\vertices_0$.
\end{lemma}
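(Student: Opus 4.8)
The plan is to realize the continuous ball-growth that defines $\exploration_{L,\weight}$ as a continuous-time Markov jump process in the radius $r$, and to match its transition probabilities step by step with those of the Eden model, the driving input being the memorylessness of the exponential edge lengths. The first step is a deterministic preliminary: for an edge $e$ with endpoints $a,b$ one checks, using the triangle inequality $|d_{X_{\map,L}}(a,\vertices_0)-d_{X_{\map,L}}(b,\vertices_0)|\le L(e)$, that its max-distance equals $\tfrac12(L(e)+d_{X_{\map,L}}(a,\vertices_0)+d_{X_{\map,L}}(b,\vertices_0))$ and is attained at an interior point. From this one deduces that a vertex $v\notin\vertices_0$ lies within radius $r$ of $\vertices_0$ precisely when it is an endpoint of some edge of max-distance at most $r$, so that the reached-vertex set at radius $r$ is exactly $\vertices_0$ together with the endpoints of the covered edges. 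This shows $\exploration_L$ is a genuine exploration process started at $\vertices_0$ and identifies $\frontier_t$ with the set of directed edges along which the growing front is currently advancing.

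Then I would set up a lazy revelation of the lengths: $L(e)$ is inspected only once a front first enters $e$. Parametrising by the radius $r$, each directed frontier edge corresponds to a front whose arc-length position equals $r$ minus the radius at which its base vertex was reached, so every front advances at unit rate. An undirected edge $e$ is covered either when a single front reaches its far (still unreached) endpoint---call this \emph{type~I}, contributing one frontier edge---or when two fronts entering from both endpoints meet---\emph{type~II}, contributing two frontier edges. The crux is a memorylessness argument: conditionally on the combinatorial state $(\vertices_t,\edges_t)$, the only information revealed about an uncovered frontier edge $e$ is a lower bound on $L(e)$, namely that its front(s) have not yet completed it, so the residual length of $e$ is again an independent $\mathrm{Exp}(1)$ variable. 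Consequently a type~I edge is covered at rate $1$ and a type~II edge, consumed simultaneously from both ends, at rate $2$, with these residual clocks independent across distinct edges.

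Given this, the rate bookkeeping matches the Eden model exactly. The total rate at which some edge is covered next is $|\frontier_t|$, a type~I edge being the next covered with probability $1/|\frontier_t|$ and a type~II edge with probability $2/|\frontier_t|$; these are precisely the probabilities of selecting, uniformly among the directed frontier edges, the one (respectively either of the two) directed versions pointing into that edge. To incorporate the stopping rule I would use that the threshold $T\sim\mathrm{Exp}(\weight)$ acts, by its own memorylessness, as an independent stopping clock of rate $\weight$ running against the radius: at each state the competition between this clock and the covering clocks yields stopping with probability $\weight/(|\frontier_t|+\weight)$ and the covering of a given directed frontier edge with probability $1/(|\frontier_t|+\weight)$, which are exactly the transition weights of the Eden model with stopping weight $\weight$. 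Since both processes start from $(\vertices_0,\emptyset)$, an induction on the step number $t$, carried out via the strong Markov property of the jump process at its successive jump times, then gives equality in law.

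The step I expect to be the main obstacle is the memorylessness argument of the second paragraph: making rigorous that, under the lazy revelation, conditioning on the entire explored history leaves the residual length of each uncovered frontier edge $\mathrm{Exp}(1)$ and independent across edges. The delicate case is a type~II edge, whose far endpoint is reached through an independent part of $\map$ before the near-side front completes; one must verify that the accumulated conditioning constrains $L(e)$ only through a lower bound of the form $L(e)>\delta_a+\delta_b$, with $\delta_a,\delta_b$ the two front advances, so that memorylessness of the single variable $L(e)$ still applies and yields combined completion rate $2$.
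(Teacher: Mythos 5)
Your proposal is correct and follows essentially the same route as the paper: both arguments reduce the claim to a race between independent exponential clocks via memorylessness, with your type~I/type~II edges advancing at rates $1$ and $2$ corresponding exactly to the paper's increment variables $\Delta T(e')$ distributed as $\mathrm{Exp}(1)$ and $\mathrm{Exp}(1/2)$, and the stopping rule handled by the same competition against an independent $\mathrm{Exp}(\weight)$ clock giving the weight $\weight/(|\frontier_{t-1}|+\weight)$. The memorylessness step you flag as the main obstacle is precisely what the paper's proof carries out (somewhat tersely) by conditioning on $\edges_{t-1}$ and noting that this constrains each frontier length $L(e')$ only through a lower bound, so your outline matches the published proof in both structure and level of detail.
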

\begin{proof}
Let $k\geq 0$ be the length of the exploration process $\exploration_{L,\weight}$.
It suffices to check that the conditional probabilities
\begin{equation}\label{eq:probconditioned}
P(k\geq t\text{ and }\edges_t=\edges'\cup\{e\} \,|\, k \geq t-1\text{ and }\edges_{t-1}=\edges')
\end{equation}
agree for all $1\leq t\leq |\edges(\map)|$, and all possible $\edges'\subset\edges(\map)$ and edges $e$.
Notice that by construction this probability for the Eden model is 0, $1/(|\frontier_{t-1}|+\weight)$, or $2/(|\frontier_{t-1}|+\weight)$, depending on whether respectively none, one, or two of the orientations of $e$ are in the frontier $\frontier_{t-1}$.

Let us consider the exploration process $\exploration_{L,\weight}$ conditioned as in (\ref{eq:probconditioned}).
If we define
\begin{equation}
T_0 := \max_{e'\in\edges_{t-1}}d_{X_{\map,L}}(e',\vertices_0) \quad\text{and}\quad T_1 := \min_{e'\in \frontier_{t-1}}d_{X_{\map,L}}(e',\vertices_0),
\end{equation}
then the condition is equivalent to $T_0 \leq T$ and $T_0 < T_1$.
For each edge $e'\in\edges(\map)$ that has at least one of its orientations in $\frontier_{t-1}$ let us define $\Delta T(e')$ as follows.
If $e'\in\edges(\map)$ has precisely one of its orientations in $\frontier_{t-1}$ and $v$ is the endpoint of $e'$ for which $v\in\vertices_{t-1}$, let
\begin{equation}\label{eq:deltat1}
\Delta T(e'):=L(e') - T_0 + d_{X_{\map,L}}(v,\vertices_0).
\end{equation} 
On the other hand, if both orientations of $e'$ are in $\frontier_{t-1}$ and $v,v'\in\vertices_{t-1}$ are its endpoints, let 
\begin{equation}\label{eq:deltat2}
\Delta T(e'):=1/2(L(e') - 2T_0 + d_{X_{\map,L}}(v,\vertices_0)+ d_{X_{\map,L}}(v',\vertices_0)).
\end{equation}
Since $T_1 - T_0 = \min_{e'} \Delta T(e')$, the condition $T_0<T_1$ is equivalent to $\Delta T(e') >0 $ for all $e'$.
Since the $L(e')$ were originally independently, exponentially distributed with expectation value $1$, with this condition the $\Delta T(e')$ are exponentially distributed with expectation value $1$ in the case of (\ref{eq:deltat1}) and  expectation value $1/2$ in the case of (\ref{eq:deltat2}).
In particular, $T_1 - T_0$ is exponentially distributed with expectation value $1/|\frontier_{t-1}|$  and the situation $T_1 - T_0 = \Delta T(e')$ occurs with probability $1/|\frontier_{t-1}|$, respectively $2/|\frontier_{t-1}|$ for an edge $e'$ with one, respectively two, orientations in $e'\in\frontier_{t-1}$.

Moreover, since $T$ was exponentially distributed with expectation value $1/\weight$, conditioned on $T>T_0$ the probability that $T > T_1$ is equal to $|\frontier_{t-1}|/(|\frontier_{t-1}|+w)$.
Combining these probabilities we recover the probabilities mentioned above for the Eden model.
\end{proof}

\begin{lemma}
Given a planar map $\map$, two distinct vertices $v_1,v_2\in\vertices(\map)$ and $\weight\in\R_+$, the following three probabilities are equal, where $\exploration$ is an Eden model explorations process on $\map$:
\begin{enumerate}
\item[(a)] The probability that $\exploration$ reaches $v_2$, i.e. $v_2 \in \mathcal{V}_k$, when $\exploration$ is started at $v_1$ and has stopping weight $\weight$.
\item[(b)] The probability that the submap explored by $\exploration$ eventually becomes connected, when $\exploration$ is started at $\vertices_0=\{v_1,v_2\}$ and has stopping weight $2\weight$.
\item[(c)] The probability $P_{\map,\weight}(v_1,v_2)$ that $d_{X_{\map,L}}(v_1,v_2) < T$, when $T$ is an exponentially distributed random variable with expectation value $1/\weight$ and the edge lengths $L:\edges(\map)\to\R_+$ are independently and exponentially distributed with expectation value 1.
\end{enumerate}
\end{lemma}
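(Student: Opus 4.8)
The plan is to reduce all three quantities to the single probability $P_{\map,\weight}(v_1,v_2)=P(d_{X_{\map,L}}(v_1,v_2)<T)$ of item (c), by invoking Lemma~\ref{thm:edenmetric} to replace each Eden process by the equivalent max-distance process $\exploration_{L,\weight}$ driven by exponential edge lengths $L$ together with an independent exponential stopping time. The equality (a)$=$(c) is then nearly immediate, while (b)$=$(c) requires understanding the moment at which the two growing clusters merge, and a bookkeeping of the factor of two hidden in the stopping weight.

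For (a)$=$(c), I would apply Lemma~\ref{thm:edenmetric} with $\vertices_0=\{v_1\}$ and stopping weight $\weight$, so that the Eden process started at $v_1$ is identical in law to $\exploration_{L,\weight}$ with $T$ exponential of expectation $1/\weight$. The crux is that $v_2\in\vertices_k$ if and only if some edge incident to $v_2$ has max-distance $d_{X_{\map,L}}(\cdot,\{v_1\})$ smaller than $T$, and that the minimum of this max-distance over all edges incident to $v_2$ equals $d_{X_{\map,L}}(v_1,v_2)$. One inequality is clear, since every edge $e\ni v_2$ satisfies $d_{X_{\map,L}}(e,\{v_1\})\geq d_{X_{\map,L}}(v_1,v_2)$ because $v_2\in e$. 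For the reverse, the last edge of a geodesic from $v_1$ to $v_2$ carries a monotone distance profile, so its max-distance is attained at the endpoint $v_2$ and equals exactly $d_{X_{\map,L}}(v_1,v_2)$. Hence $v_2\in\vertices_k$ iff $d_{X_{\map,L}}(v_1,v_2)<T$, which is precisely (c).

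For (b)$=$(c), I would apply Lemma~\ref{thm:edenmetric} with $\vertices_0=\{v_1,v_2\}$ and stopping weight $2\weight$, so that the stopping time $T'$ is exponential of expectation $1/(2\weight)$ and edges are explored in increasing order of $d_{X_{\map,L}}(e,\{v_1,v_2\})=\max_{x\in e}\min(d_{X_{\map,L}}(x,v_1),d_{X_{\map,L}}(x,v_2))$. I claim the explored submap first becomes connected exactly at max-distance $\tau^\ast:=\tfrac12\,d_{X_{\map,L}}(v_1,v_2)$. For the upper bound, every point $x$ on a geodesic from $v_1$ to $v_2$ has $d_{X_{\map,L}}(x,v_1)+d_{X_{\map,L}}(x,v_2)=d_{X_{\map,L}}(v_1,v_2)$, so the relevant minimum is $\leq\tau^\ast$ there; consequently every edge of the geodesic has max-distance $\leq\tau^\ast$, and the whole geodesic, which bridges the two clusters, is explored by time $\tau^\ast$. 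For the lower bound, any path from $v_1$ to $v_2$ contains by the intermediate value theorem a point $x$ with $d_{X_{\map,L}}(x,v_1)=d_{X_{\map,L}}(x,v_2)$, which by the triangle inequality is $\geq\tau^\ast$; hence the edge containing $x$ has max-distance $\geq\tau^\ast$, and no connecting path can be fully explored earlier. Therefore the submap is eventually connected iff $\tau^\ast<T'$, i.e.\ iff $d_{X_{\map,L}}(v_1,v_2)<2T'$; since $2T'$ is exponential of expectation $1/\weight$ and independent of $L$, this probability equals (c).

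The main obstacle is the cluster-merging analysis in (b): one must pin down the exact critical value $\tau^\ast$ at which the two regions connect, which rests on the min--max characterization of connecting paths above and on correctly handling the tent-shaped distance profile along individual edges, including possible local maxima in edge interiors. The scaling that sends stopping weight $2\weight$ to the rescaled stopping time $2T'$ is precisely what lets the factor of two in the stopping weight compensate the factor $\tfrac12$ in $\tau^\ast$; verifying this cancellation, together with the a.s.\ absence of ties between $2T'$ and $d_{X_{\map,L}}(v_1,v_2)$ that makes strict versus non-strict inequalities irrelevant, completes the identification of all three probabilities.
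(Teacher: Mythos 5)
Your proposal is correct and follows essentially the same route as the paper: both reduce (a) and (b) to (c) via Lemma~\ref{thm:edenmetric}, using that the minimum over edges incident to $v_2$ of the max-distance to $v_1$ equals $d_{X_{\map,L}}(v_1,v_2)$, and that the two clusters grown from $\{v_1,v_2\}$ connect precisely at threshold $\tfrac{1}{2}d_{X_{\map,L}}(v_1,v_2)$. You merely spell out in detail the two geometric facts that the paper asserts without proof (the monotone profile on the last geodesic edge, and the min--max argument locating the merging edge), which is a faithful filling-in rather than a different argument.
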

\begin{proof}
Notice that the last probability $P_{\map,\weight}(v_1,v_2)$ is also the probability that an edge $e\in\edges(\map)$ exists that has $v_2$ as its endpoint and for which the max-distance $d_{X_{\map,L}}(e,v_1) < T$.
By Lemma \ref{thm:edenmetric} this is exactly the probability of (a). 
Second, observe that $d_{X_{\map,L}}(v_2,v_1)<T$ is equivalent to the condition that the set of edges $e$ for which $d_{X_{\map,L}}(e,\{v_1,v_2\})<T/2$ comprises a connected submap containing $v_1$ and $v_2$, but by Lemma \ref{thm:edenmetric} this is equivalent to situation (b).
\end{proof}
In the following we will simply refer to this probability as $P_{\map,w}(v_1,v_2)$ and we will use the convention $P_{\map,w}(v_1,v_2)=0$ when $v_1=v_2$.

Using these results we can interpret the two-point functions of weighted maps in terms of the Eden model exploration processes.
Indeed, the sum of the probabilities $P_{\map,\weight}(v_1,v_2)$ over all $(\map,v_1,v_2)\in\maps_{F,2}^{\bullet}$ is equal to the Laplace transform $\hat{G}_{F,2}^{\bullet}(w)$ of the corresponding two-point function $G_{F,2}^{\bullet}(T)$, i.e.
\begin{align}
\int_{\maps_{F,2}^{\bullet}}\!\!\rmd\nu_{F,2}(\map,v_1,v_2)P_{\map,\weight}(v_1,v_2) &= \int_{\metricmaps_{F,2}^{\bullet}}\!\rmd\nu_{F,2}(\map,L,v_1,v_2)e^{-\weight d_{X_{\map,L}}(v_1,v_2)}\nonumber\\
&=\int_0^{\infty}\!\!\!\rmd T\,e^{-\weight T}G_{F,2}^{\bullet}(T) =: \hat{G}_{F,2}^{\bullet}(w).\label{eq:G2laplfromP}
\end{align}
For instance, we can compute explicitly the Laplace transforms of $G_{g,2}^{(1)}(T)$ and $G_{g,2}^{(2)}(T)$ as follows.
We have
\begin{align}
\int_0^{\infty}\rmd T e^{-w T} \partial_T^2\log\cfun(T) & = \Sigma^2(\Sigma^2-\alpha^2) \int_0^\infty \rmd T \frac{e^{-w T}}{(\Sigma\cosh\Sigma T+\alpha \sinh\Sigma T)^2} \nonumber\\
&= -4\Sigma^2\beta \int_0^\infty \rmd T\frac{e^{-w T-2\Sigma T}}{(1-\beta e^{-2\Sigma T})^2} = -2\Sigma\beta \int_0^\beta\frac{\rmd x}{x} \frac{(x/\beta)^{\frac{w}{2\Sigma}+1}}{(1-x)^2}\nonumber\\
&= -2\Sigma \beta^{-\frac{w}{2\Sigma}} B_\beta\left(1+\frac{w}{2\Sigma},-1\right),
\end{align}
where we defined $\beta := (\alpha-\Sigma)/(\alpha+\Sigma)$ and $B_\beta\left(a,b\right)$ is the \emph{incomplete beta function}.
Therefore, using (\ref{eq:G2-1formula}) and (\ref{eq:G2-2formula}) and partial integrations to take care of the additional derivatives, we find
\begin{align}
&\hat{G}_{g,2}^{(1)}(w) = \alpha^2-\Sigma^2 -2 w\Sigma \beta^{-\frac{w}{2\Sigma}} B_\beta\left(1+\frac{w}{2\Sigma},-1\right), \\
&\hat{G}_{g,2}^{(2)}(w) = \compactfrac{1}{4}(1+w)^2\hat{G}_{g,2}^{(1)}(w) - \compactfrac{1}{4}(2+w+ \partial_T)G_{g,2}^{(2)}(T)|_{T=0}\nonumber\\
&\quad\quad=\frac{1-4\Sigma^2}{48}\left( (w-\alpha+1)^2+2\alpha^2-2\alpha + \compactfrac{1}{4}\right) - \frac{w(1+w)^2}{2}\Sigma  \beta^{-\frac{w}{2\Sigma}} B_\beta\left(1+\frac{w}{2\Sigma},-1\right).\label{eq:laplG2-2expr}
\end{align}

Given $(\map,v_1)\in\maps_{F,1}^{(2)}$ and $e\in\edges(\map)$, let $\map'\in\map_{F,2}^{(2)}$ be the weighted map obtained from $\map$ by inserting a bivalent vertex $v_2$ in the edge $e$. 
Then $P_{\map',w}(v_1,v_2)$ is equal to the probability $P(e\in\edges_k)$ that the edge $e$ is explored in an Eden model exploration process with stopping weight $w$ on $\map$ started at $v_1$.
This means that we can interpret $\hat{G}_{F,2}(w)$ as a sum over expectation values of the number of explored edges.
To be precise, we have the following result.

\begin{theorem}\label{thm:edenlength}
Let $\map$ be a uniformly random rooted cubic planar map with $F\geq 3$ faces. 
The expected length $\langle k \rangle_{F,\weight}$ of an Eden model exploration process with stopping weight $w$ started at the root edge is
\begin{equation}
\langle k \rangle_{F,\weight} = 2^{4-2F}\frac{F!(F-2)!!}{(3F-6)!!}\,\, [g^F]\hat{G}_{g,2}^{(2)}(w),
\end{equation}
where $[g^F]\hat{G}_{g,2}^{(2)}(w)$ is the coefficient of $g^{F}$ in (\ref{eq:laplG2-2expr}).
\end{theorem}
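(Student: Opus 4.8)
The plan is to compute $\langle k\rangle_{F,\weight}$ as an expected number of explored edges and to convert it, via the correspondence recorded just before the theorem, into the Laplace-transformed two-point function $\hat{G}_{F,2}^{(2)}(\weight)$; the only nontrivial input is a careful bookkeeping of roots and automorphisms.

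First I would use that, since the exploration starts from $\mathcal{E}_0=\emptyset$ and adds exactly one edge per step, its length equals the number of explored edges, $k=|\mathcal{E}_k|$. Hence for a fixed rooted cubic map $\map$, writing $v_1$ for the bivalent vertex inserted in the root edge so that $(\map_1,v_1)\in\maps_{F,1}^{(2)}$,
\[
E\big[\,k \mid \map\,\big]=\sum_{e\in\edges(\map_1)}P(e\in\mathcal{E}_k)=\sum_{e\in\edges(\map_1)}P_{\map_2,\weight}(v_1,v_2),
\]
where $\map_2\in\maps_{F,2}^{(2)}$ is obtained from $\map_1$ by inserting a bivalent vertex $v_2$ into $e$, and the second equality is precisely the identity stated just above the theorem. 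Averaging over the $W_F^{(3)}$ equally likely rooted cubic maps gives $\langle k\rangle_{F,\weight}=\Xi_F/W_F^{(3)}$ with $\Xi_F:=\sum_{\map}\sum_{e}P_{\map_2,\weight}(v_1,v_2)$.

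The heart of the argument is to show $\Xi_F=2\,\hat{G}_{F,2}^{(2)}(\weight)$. I would do this by exhibiting a bijection between the pairs $(\map,e)$ summed in $\Xi_F$ and the elements of $\maps_{F,2}^{(2)}$ equipped with one extra datum: an orientation of the edge carrying $v_1$ (equivalently, a direction for the root edge of the underlying cubic map). In one direction, smoothing out $v_2$ recovers $\map_1$ together with the distinguished edge $e$, and smoothing out $v_1$ recovers the cubic map with a distinguished oriented edge, i.e.\ a rooted cubic map; the inverse inserts $v_1$ and then $v_2$. Since fixing an oriented edge of a connected planar map kills all orientation-preserving automorphisms, each such oriented-$v_1$ object has trivial automorphism group and is counted with weight $1$, exactly as the rooted cubic maps are. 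Consequently the action of $\aut(\map_2)$ on the two orientations of the $v_1$-edge is free, so $\map_2$ admits $2/|\aut(\map_2)|$ non-isomorphic oriented-$v_1$ versions. Summing the automorphism-invariant quantity $P_{\map_2,\weight}(v_1,v_2)$ over these rooted objects then yields $\Xi_F=\sum_{\map_2\in\maps_{F,2}^{(2)}}\tfrac{2}{|\aut(\map_2)|}P_{\map_2,\weight}(v_1,v_2)=2\,\hat{G}_{F,2}^{(2)}(\weight)$ by the defining relation (\ref{eq:G2laplfromP}).

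Finally I would substitute the explicit count (\ref{eq:numrooted}), in the form $1/W_F^{(3)}=2^{3-2F}F!(F-2)!!/(3F-6)!!$, to get $\langle k\rangle_{F,\weight}=2\,\hat{G}_{F,2}^{(2)}(\weight)/W_F^{(3)}=2^{4-2F}\tfrac{F!(F-2)!!}{(3F-6)!!}\hat{G}_{F,2}^{(2)}(\weight)$, and note that $\hat{G}_{F,2}^{(2)}(\weight)=[g^F]\hat{G}_{g,2}^{(2)}(\weight)$ by linearity of the Laplace transform in the grand-canonical sum $G_{g,2}^{(2)}=\sum_F g^F G_{F,2}^{(2)}$, which is (\ref{eq:laplG2-2expr}). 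The main obstacle is this middle step: pinning down the exact multiplicity with which each $\map_2$ occurs and verifying that the root orientation contributes precisely the factor $2$, i.e.\ that automorphisms are handled consistently between the weight-$1$ rooted enumeration and the $1/|\aut|$ measure on $\maps_{F,2}^{(2)}$; one should also check that degenerate configurations (such as $v_2$ placed on an edge incident to $v_1$) introduce no double counting.
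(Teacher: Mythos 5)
Your argument is correct and follows essentially the same route as the paper: the identification $P(e\in\mathcal{E}_k)=P_{\map_2,\weight}(v_1,v_2)$ stated just before the theorem, combined with the root/automorphism bookkeeping that the paper delegates to the enumeration (\ref{eq:measure2}) with $n=1$ (equivalently $\nu_{F,1}(\maps_{F,1}^{(2)})=W_F^{(3)}/2$). Your explicit verification that $|\aut(\map_2)|$ acts freely on the two root orientations, giving the factor $2/|\aut(\map_2)|$ and hence $\Xi_F=2\hat{G}_{F,2}^{(2)}(\weight)$, is exactly the content the paper leaves implicit.
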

\begin{proof}
This is a simple combination of the remark above and (\ref{eq:measure2}) with $n=1$.
\end{proof}

For instance, the expectation values for $F=3$ and $F=4$ faces are
\begin{align}
\langle k \rangle_{3,\weight} &= -\frac{7}{w+2}+\frac{9}{2 (w+3)}+\frac{6}{(w+2)^2}+\frac{9}{2 (w+1)},\\
\langle k \rangle_{4,\weight} &= -\frac{37}{4 (w+2)}+\frac{27}{8 (w+3)}+\frac{9}{4 (w+4)}-\frac{21}{4 (w+2)^2}+\frac{81}{8 (w+3)^2}\nonumber\\
&\quad\quad+\frac{9}{(w+2)^3}+\frac{45}{8 (w+1)}+\frac{27}{8 (w+1)^2}.
\end{align}

\begin{figure}
\begin{center}
\includegraphics[width=\linewidth]{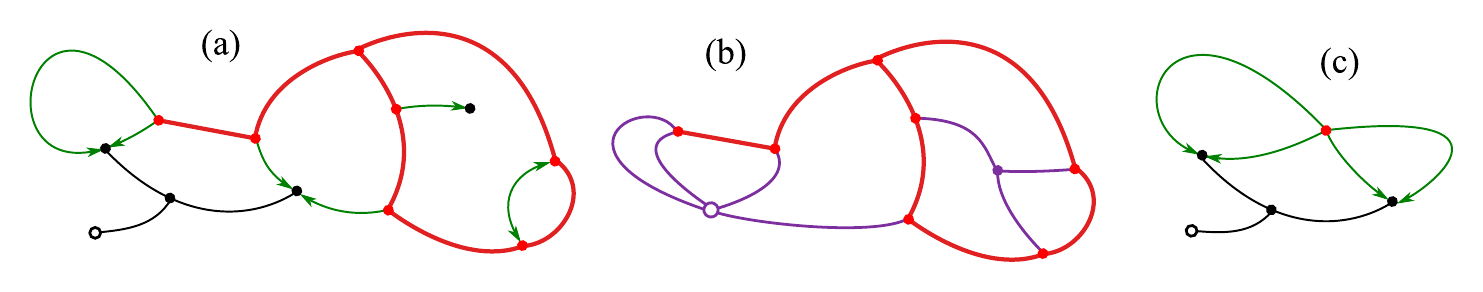}
\end{center}
\caption{(a) A possible state of an exploration process at time $t$ with an unexplored marked vertex represented by an open dot. (b) The explored submap $\map_t$ at time $t$ with two external vertices of which one is distinguished. (c) The unexplored submap represented by the distinguished external vertex. An exploration process starting at the red vertex of this planar map has the same probability of reaching the marked vertex as the one in (a) (given its current state).}
\label{fig:exploration}
\end{figure}
Let $\exploration=((\vertices_t,\edges_t))_{t=0}^k$ be an Eden model exploration process on a planar map $\map$.
We define the \emph{explored submap at time $t$} to be the planar map $\map_t$ obtained from the submap $\map_t'$ of $\map$ with edges $\edges_t$ by inserting a new marked vertex, called an \emph{external vertex}, in each face of $\map_t'$ that contains at least one frontier edge and a new edge starting at the starting corner of each frontier edge in $\vec{\mathcal{F}}_t$ and ending at one of the external vertices (see figure \ref{fig:exploration}b).
To each external vertex $v$ we can associate the \emph{unexplored submap represented by} $v$, which is given by contracting the edges of $\map$ that do not lie in the interior of the face of $\map'_t$ corresponding to $v$ to a single marked vertex (see Figure \ref{fig:exploration}c).
In case $\map$ possesses an unexplored marked vertex $v_1$, it is contained in a unique unexplored submap, which can then be regarded as having two marked vertices (the red dot and the open dot in Figure \ref{fig:exploration}c).

\begin{lemma}\label{thm:markovproperty}
Let $(\map,v_1,v_2)\in\maps_{F,2}^{(d_1,d_2)}$ be a random planar map (with respect to the measure $\nu_{F,2}$) and let $\exploration=((\vertices_t,\edges_t))_{t=0}^k$ be an Eden model exploration process started at $v_1$ with stopping weight $w$.
For $t\geq 0$, conditioned on $k\geq t$, on $v_2\notin\vertices_t$, on the explored submap $\map_t$ at time $t$, and on the unexplored submap containing $v_2$ being a member of $\maps_{F',2}^{(d_1',d_2)}$, the probability that $\exploration$ reaches $v_2$ equals the probability that a similar exploration process started at $v_1'$ on a random planar map $(\map',v_1',v_2')\in\maps_{F',2}^{(d_1',d_2)}$ w.r.t. the measure $\nu_{F',2}$ reaches $v_2'$.
\end{lemma}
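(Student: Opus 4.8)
The plan is to establish this spatial Markov property by combining a combinatorial gluing decomposition of the uniform measure with a thinning argument for the Eden dynamics. Fix once and for all the conditioning data: the explored submap $M:=\map_t$ (including its embedding and the marked starting vertex), the events $\{k\geq t\}$ and $\{v_2\notin\vertices_t\}$, and the constraint that the unexplored submap $\map':=\mathfrak{u}_{v_2}$, represented by the external vertex $v_1'$ in whose face $v_2$ lies, belongs to $\maps_{F',2}^{(d_1',d_2)}$. The first step is to determine the conditional law of the unexplored pieces. I would show that, conditioned on $\map_t=M$, the collection of unexplored submaps $(\mathfrak{u}_v)_v$ attached to the external vertices of $M$ is obtained by independently gluing into each relevant face of the explored submap a uniformly random map with the prescribed boundary degree $\deg_M(v)$. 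The key observation is that there is a measure-respecting bijection between full maps $\map$ compatible with $M$ and the tuples $(\mathfrak{u}_v)_v$, under which the face count decomposes additively and $|\aut(\map)|$ factorizes over the pieces: the marked embedded submap $M$ rigidifies the global symmetry, so an automorphism fixing the marked data is determined by its restriction to each $\mathfrak{u}_v$. Moreover, the probability of producing $M$ from a compatible $\map$ — a sum over admissible exploration histories of products of per-step factors $1/(|\frontier_{s-1}|+\weight)$ — depends only on $M$ and not on the glued pieces, since each frontier size $|\frontier_s|$ is determined by the explored part alone. This history weight therefore cancels in the conditional law, leaving $\nu_{F,2}(\map)$ factorized as a product over the pieces of their respective uniform measures; restricting to $\mathfrak{u}_{v_2}\in\maps_{F',2}^{(d_1',d_2)}$ forces $\map'$ to be uniform on $\maps_{F',2}^{(d_1',d_2)}$, because every such $\map'$ is completed to a full map by the same ``rest'' weight, which depends only on $F'$ and $d_1'$.

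The second step is to argue that, given the conditioning and given $\mathfrak{u}_{v_2}=\map'$, the conditional probability that $\exploration$ reaches $v_2$ equals $P_{\map',\weight}(v_1',v_2')$. The mechanism is a thinning property of the Eden dynamics: at each step the competing options carry weight $1$ per frontier edge and weight $\weight$ for stopping, all symmetric, so if one records only those steps that explore an edge inside the face containing $v_2$ and discards steps occurring in other faces, the induced sub-process is again an Eden exploration with stopping weight $\weight$. At time $t$ no edge of $\mathfrak{u}_{v_2}$ has yet been explored, and its frontier consists exactly of the $d_1'$ directed edges crossing into that face, which are precisely the edges incident to $v_1'$ in $\map'$; hence the induced sub-process is a fresh Eden exploration on $\map'$ started at $v_1'$ with stopping weight $\weight$, and $\exploration$ reaches $v_2$ if and only if this sub-process reaches $v_2'$. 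Equality in law then identifies the conditional probability with $P_{\map',\weight}(v_1',v_2')$.

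Combining the two steps, the conditional probability of reaching $v_2$ equals the average of $P_{\map',\weight}(v_1',v_2')$ over a uniformly random $(\map',v_1',v_2')\in\maps_{F',2}^{(d_1',d_2)}$, which by the equivalence of (a) and (c) in the preceding lemma is exactly the probability that a similar exploration started at $v_1'$ reaches $v_2'$; since this quantity depends only on the type $\maps_{F',2}^{(d_1',d_2)}$ and not on the particular $M$, the claim follows. I expect the main obstacle to be the first step: verifying that the uniform measure genuinely factorizes over the glued pieces with the automorphism and face-count bookkeeping handled correctly, and that the exploration-history weights are independent of the unexplored part. The thinning argument of the second step is morally the memorylessness of the exponential clocks (equivalently of the stopping variable $T$ in Lemma \ref{thm:edenmetric}), but it still requires care to confirm that explorations in disjoint faces do not interfere and that the initial frontier of the sub-process matches the degree $d_1'$ of $v_1'$.
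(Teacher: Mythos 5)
Your proof is correct, but the mechanism you use for the dynamical step is genuinely different from the paper's. The paper routes that step through Lemma \ref{thm:edenmetric}: conditioned on the explored submap, the probability that $\exploration$ reaches $v_2$ is rewritten as the probability that $d_{X_{\map,L}}(\vertices_t,v_2)<T$ for exponential edge lengths and an exponential stopping time, and one then observes that the shortest path from $v_2$ to $\vertices_t$ necessarily lies in the face represented by the external vertex, so this distance has the same law as the distance between the marked vertices of the unexplored submap $(\map',v_1',v_2')$. You instead stay entirely at the level of the discrete dynamics and use a thinning argument: since the frontier of the $v_2$-face changes only at recorded steps, conditioning each step on not exploring an edge in another face leaves each frontier edge of that face with weight $1$ and stopping with weight $\weight$, so the recorded sub-process is itself an Eden exploration on $\map'$ started at $v_1'$ with stopping weight $\weight$. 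Both routes are valid; yours avoids invoking the exponential-clock coupling at the cost of checking that explorations in disjoint faces do not interfere, while the paper's is shorter because the memorylessness has already been packaged into Lemma \ref{thm:edenmetric}. On the combinatorial side, your step 1 --- the factorization of $\nu_{F,2}$ over the glued pieces, the independence of the exploration-history weight from the unexplored part, and the bookkeeping of automorphism factors --- is precisely what the paper compresses into its single closing sentence (``each $(\map',v_1',v_2')$ can occur as unexplored submap and the probability distribution agrees with that of the measure $\nu_{F',2}$''), so your more explicit treatment fills in the step the paper leaves implicit rather than diverging from it.
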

\begin{proof}
Let $(\map,v_1,v_2)$ and $\exploration$ be conditioned as above.
By Lemma \ref{thm:edenmetric} the described probability $P$ is equal to the probability that $d_{X_{\map,L}}(\vertices_t,v_2)<T$ for random edge lengths $L$ and and an exponential random $T$ with mean $1/\weight$.
Since the shortest path from $v_2$ to $\vertices_t$ in $X_{\map,L}$ is necessarily contained in the face of $\map_t'$ corresponding to the external vertex $v$, its random length is identical in law to the distance between the marked vertices in the unexplored submap $(\map',v_1',v_2)$ if its edge lengths are chosen randomly with the same distribution.
Therefore $P$ is equal to the probability that an Eden model exploration process started at $v_1'$ on $\map'$ reaches $v_2$.
The result follows by noting that, for fixed $F'$ and $d_1'$, each $(\map',v_1',v_2')\in\maps_{F',2}^{(d_1',d_2)}$ can occur as unexplored submap and the probability distribution agrees with that of the measure $\nu_{F',2}$.
\end{proof}

For convenience let us introduce the notation $G_{F,2,\rt}^{(d_1,d_2)}(T)$ for the \emph{rooted} two-point function of almost cubic maps with one of the edges starting at $v_1$ marked as root edge, which is given by
\begin{equation}
G_{F,2,\rt}^{(d_1,d_2)}(T) := d_1\,G_{F,2}^{(d_1,d_2)}(T)
\end{equation}
and corresponding Laplace transform $\hat{G}_{F,2,\rt}^{(d_1,d_2)}(\weight)$.
Similarly, we define the \emph{rooted one-point function} as
\begin{equation}
W_{F}^{(d_1)} := d_1\,\nu_{F,1}(\maps_{F,1}^{(d_1)}),
\end{equation}
with the convention that $W_{F}^{(0)} = \delta_{F,1}$.
Notice that, with this convention, (\ref{eq:G2-generalzero}) implies that
\begin{equation}\label{eq:G2-generalzero2}
G_{F,2,\rt}^{(d_1,d_2)}(0) = d_1\, W_F^{(d_1+d_2-2)},\quad\quad(d_1,d_2\geq 1).
\end{equation}
Finally, let us introduce the generating functions
\begin{equation}\label{eq:genfunG2-general}
G_{g,\rt}(z_1,z_2;T) := \sum_{F=1}^{\infty} \sum_{d_1=1}^{\infty}\sum_{d_2=1}^{\infty} g^Fz_1^{-d_1-1}z_2^{-d_2-1} G_{F,2,\rt}^{(d_1,d_2)}(T)
\end{equation}
and
\begin{equation}\label{eq:genfunG1}
W_{g}(z_1) := \sum_{F=1}^{\infty} \sum_{d_1=0}^{\infty} g^Fz_1^{-d_1-1} W_{F}^{(d_1)}.
\end{equation}

\begin{proposition}\label{thm:loopequation}
For $d_1,d_2\geq 1$, the rooted two-point functions satisfy the equations
\begin{equation}\label{eq:loopequation1}
(d_1+\partial_T) G_{F,2,\rt}^{(d_1,d_2)}(T) = d_1\,G_{F,2,\rt}^{(d_1+1,d_2)}(T) + 2d_1 \sum_{F'=1}^{F-1}\sum_{d_1'=1}^{d_1-2}W_{F-F'}^{(d_1-d_1'-2)}\,G_{F',2,\rt}^{(d_1',d_2)}(T),
\end{equation}
and their generating functions satisfy
\begin{equation}\label{eq:loopequation2}
\frac{\partial}{\partial T} G_{g,\rt}(z_1,z_2;T) = \frac{\partial}{\partial z_1}\left[(z_1-z_1^2-2W_{g}(z_1))G_{g,\rt}(z_1,z_2;T)\right].
\end{equation}
\end{proposition}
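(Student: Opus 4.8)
The plan is to establish the recursion (\ref{eq:loopequation1}) first and then to deduce the generating-function identity (\ref{eq:loopequation2}) from it. Rather than working in the $T$-variable, I would pass to the Laplace transform, since this is exactly what the Eden model of Section \ref{sec:eden} computes. Multiplying (\ref{eq:loopequation1}) by $e^{-wT}$ and integrating, using $\int_0^{\infty}e^{-wT}\partial_T G\,\rmd T = w\hat G - G(0)$ together with the initial value $G_{F,2,\rt}^{(d_1,d_2)}(0)=d_1 W_F^{(d_1+d_2-2)}$ supplied by (\ref{eq:G2-generalzero2}), injectivity of the Laplace transform shows that (\ref{eq:loopequation1}) is equivalent to the algebraic identity
\[
(d_1+w)\,\hat G_{F,2,\rt}^{(d_1,d_2)}(w) = d_1 W_F^{(d_1+d_2-2)} + d_1\,\hat G_{F,2,\rt}^{(d_1+1,d_2)}(w) + 2d_1\sum_{F'=1}^{F-1}\sum_{d_1'=1}^{d_1-2}W_{F-F'}^{(d_1-d_1'-2)}\,\hat G_{F',2,\rt}^{(d_1',d_2)}(w).
\]
It therefore suffices to prove this identity combinatorially.

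By (\ref{eq:G2laplfromP}) and Lemma \ref{thm:edenmetric}, $\hat G_{F,2,\rt}^{(d_1,d_2)}(w)$ is the sum, over almost cubic maps $(\map,v_1,v_2)\in\maps_{F,2}^{(d_1,d_2)}$ rooted at a directed edge leaving $v_1$, of the probability $P_{\map,w}(v_1,v_2)$ that an Eden exploration started at $v_1$ with stopping weight $w$ reaches $v_2$. At time $0$ the frontier $\frontier_0$ consists of the $d_1$ edges leaving $v_1$, so the first step carries total weight $d_1+w$; conditioning on it gives $(d_1+w)P_{\map,w}(v_1,v_2)=\sum_{e\in\frontier_0}P(\text{reach }v_2\mid\text{peel }e\text{ first})$, the stopping outcome contributing $0$ because $v_1\neq v_2$. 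Summing over the $d_1$ choices of root edge converts this into $(d_1+w)\hat G_{F,2,\rt}^{(d_1,d_2)}(w)=d_1\sum_{\text{rooted}}P(\text{reach }v_2\mid\text{peel the root edge first})$, which produces the overall factor $d_1$ on the right-hand side.

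It remains to classify the endpoint $v$ of the peeled root edge. Since every vertex other than $v_1,v_2$ is cubic, $v$ is either (i) equal to $v_2$, (ii) a previously unvisited cubic vertex, or (iii) equal to $v_1$ (the root edge is a loop); these cases are exhaustive. In case (i) the walk reaches $v_2$ at once, and contracting the root edge puts these configurations in measure-preserving bijection with corner-rooted maps in $\maps_{F,1}^{(d_1+d_2-2)}$, contributing $W_F^{(d_1+d_2-2)}$. In case (ii) the explored piece is a single non-separating edge, so all $F$ faces survive and the unexplored submap lies in $\maps_{F,2}^{(d_1+1,d_2)}$, its boundary vertex having degree $(d_1-1)+2=d_1+1$; by the Markov property of Lemma \ref{thm:markovproperty} the continued reach probability sums to $\hat G_{F,2,\rt}^{(d_1+1,d_2)}(w)$. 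In case (iii) peeling the loop splits the unexplored region into two disks, distributing the $d_1-2$ remaining half-edges at $v_1$ between them; the side containing $v_2$ lies in some $\maps_{F',2}^{(d_1',d_2)}$ and the other in $\maps_{F-F',1}^{(d_1-d_1'-2)}$. Since reaching $v_2$ depends only on the $v_2$-side, Lemma \ref{thm:markovproperty} yields the factorized contribution $W_{F-F'}^{(d_1-d_1'-2)}\,\hat G_{F',2,\rt}^{(d_1',d_2)}(w)$, summed over the splitting and carrying an overall factor $2$ from the two sides of the loop. Adding the three cases gives the displayed identity.

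Finally, to obtain (\ref{eq:loopequation2}) I multiply (\ref{eq:loopequation1}) by $g^F z_1^{-d_1-1}z_2^{-d_2-1}$ and sum over $F,d_1,d_2$, writing $P:=G_{g,\rt}(z_1,z_2;T)$. The identity $\sum_{d_1}d_1 z_1^{-d_1-1}(\cdot)=-\partial_{z_1}\sum_{d_1}z_1^{-d_1}(\cdot)$ turns the $d_1$ in $(d_1+\partial_T)G_{F,2,\rt}^{(d_1,d_2)}$ on the left into $-\partial_{z_1}[z_1 P]$, so the left-hand side becomes $\partial_T P-\partial_{z_1}[z_1 P]$. On the right, the term $d_1\hat G_{F,2,\rt}^{(d_1+1,d_2)}$ becomes $-\partial_{z_1}[z_1^2 P]$ after the shift $d_1\mapsto d_1+1$, while the convolution collapses under $d_1=d_1'+(d_1-d_1'-2)+2$ to $-2\partial_{z_1}[W_g(z_1)P]$, the prefactor $d_1$ being exactly the weight that turns the product of the two series into a single $z_1$-derivative. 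Equating the two sides yields $\partial_T P=\partial_{z_1}[(z_1-z_1^2-2W_g(z_1))P]$, which is (\ref{eq:loopequation2}). I expect the \emph{main obstacle} to be the bookkeeping in case (iii): justifying the factor $2$, checking that the discarded side contributes the \emph{rooted} one-point function $W_{F-F'}^{(d_1-d_1'-2)}$ because the loop supplies a marked corner there, and verifying that the half-edge and face counts reproduce precisely the ranges $1\le d_1'\le d_1-2$ and $1\le F'\le F-1$ together with the convention $W_{F}^{(0)}=\delta_{F,1}$; handling the rooting and automorphism factors consistently in the unrooted-to-rooted conversion is the other delicate point.
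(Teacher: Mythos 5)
Your proposal is correct and follows essentially the same route as the paper: both pass to the Laplace transform, analyze the first step of the Eden exploration started at $v_1$ (splitting into the three cases where the peeled edge hits $v_2$, a fresh cubic vertex, or loops back to $v_1$), invoke the Markov property of Lemma \ref{thm:markovproperty} together with the initial condition (\ref{eq:G2-generalzero2}), and then obtain (\ref{eq:loopequation2}) by direct substitution into the generating functions. The details you flag as potential obstacles (the factor $2$ in the loop case from the two sides containing $v_2$, the degree count $d_1+1$ in case (ii), and the summation ranges) all match the paper's treatment.
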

\begin{proof}
Let us inspect the first step of an Eden model exploration process $\exploration$ with stopping weight $w$ started at the vertex $v_1$ of a random $(\map,v_1,v_2)\in\maps_{F,2}^{(d_1,d_2)}$.
With probability $\weight/(d_1+\weight)$ the process is killed immediately.
Otherwise $\edges_1 = \{e\}$ and $e$ is an edge starting at $v_1$ and ending at $v_1'$.
Three situations are now possible: (A) $v_1' = v_2$, (B) $v_1' \neq v_2$ and $v_1' \neq v_1$, or (C) $v_1'=v_1$.
The probability $P$ that $\exploration$ reaches $v_2$ decomposes accordingly as
\begin{equation}\label{eq:probdecompABC}
P := \frac{\hat{G}_{F,2,\rt}^{(d_1,d_2)}(\weight)}{\hat{G}_{F,2,\rt}^{(d_1,d_2)}(0)}=\frac{d_1}{d_1+\weight}\left(P_A+P_B+P_C\right).
\end{equation}
In case (A), $\exploration$ necessarily reaches $v_2$ therefore $P_A$ is simply the probability of case (A) occurring, which is
\begin{equation}
P_A=\frac{W_{F}^{(d_1+d_2-2)}}{\hat{G}_{F,2,\rt}^{(d_1,d_2)}(0)}=\frac{ G_{F,2,\rt}^{(d_1,d_2)}(0)}{d_1\,\hat{G}_{F,2,\rt}^{(d_1,d_2)}(0)},
\end{equation} 
where we used (\ref{eq:G2-generalzero2}).

In case (B), the explored submap at time $t=1$ contains one external vertex of degree $d_1+1$ representing an unexplored submap with $F$ faces.
Applying Lemma \ref{thm:markovproperty} we find
\begin{equation}
P_B = \frac{\hat{G}_{F,2,\rt}^{(d_1+1,d_2)}(\weight)}{\hat{G}_{F,2,\rt}^{(d_1,d_2)}(0)}.
\end{equation} 

In case (C), which can only occur when $d_1 \geq 3$ and $F\geq 2$, the explored submap at time $t=1$ contains two external vertices, $v$ and $v'$.
The sum of the degrees of the external vertices is $d_1-2$ and the total number of faces in both unexplored submaps is $F$.
Therefore, the probability that (C) occurs and that $v_2$ is in the unexplored submap represented by $v$ with $1\leq F'< F$ faces and $v$ having degree $1\leq d_1'\leq d_1-2$ is
\begin{equation}
\frac{W_{F-F'}^{(d_1-d_1'-2)}\,\hat{G}_{F',2,\rt}^{(d_1',d_2)}(0)}{\hat{G}_{F,2,\rt}^{(d_1,d_2)}(0)}.
\end{equation}
Hence, using Lemma \ref{thm:markovproperty},
\begin{equation}
P_C =2\sum_{F'=1}^{F-1}\sum_{d_1'=1}^{d_1-2} \frac{W_{F-F'}^{(d_1-d_1'-2)}\,\hat{G}_{F',2,\rt}^{(d_1',d_2)}(\weight)}{\hat{G}_{F,2,\rt}^{(d_1,d_2)}(0)},
\end{equation}
where the factor of 2 comes from the fact that $v_2$ can be in either of the two unexplored submaps.

Plugging the probabilities $P_A$, $P_B$, and $P_C$ into (\ref{eq:probdecompABC}) leads to
\begin{equation}
(d_1+\weight)\hat{G}_{F,2,\rt}^{(d_1,d_2)}(\weight)-G_{F,2,\rt}^{(d_1,d_2)}(0) = d_1\,\hat{G}_{F,2,\rt}^{(d_1+1,d_2)}(\weight)+2d_1\sum_{F'=1}^{F-1}\sum_{d_1'=1}^{d_1-2} W_{F-F'}^{(d_1-d_1'-2)}\,\hat{G}_{F',2,\rt}^{(d_1',d_2)}(\weight),
\end{equation}
but this is exactly the Laplace transform of equation (\ref{eq:loopequation1}).
By plugging (\ref{eq:genfunG2-general}) and (\ref{eq:genfunG1}) into (\ref{eq:loopequation2}) one can easily check that (\ref{eq:loopequation2}) is equivalent to (\ref{eq:loopequation1}).
\end{proof}

For $d_1\leq 2$, the sum in (\ref{eq:loopequation1}) vanishes and therefore Proposition \ref{thm:loopequation} implies
\begin{align}
G_{F,2}^{(2,d)}(T) &= \compactfrac{1}{2}(1+\partial_T)G_{F,2}^{(1,d)}(T),\\
G_{F,2}^{(3,d)}(T) &= \compactfrac{1}{3}(2+\partial_T)G_{F,2}^{(2,d)}(T).
\end{align}
Hence, we recover (\ref{eq:G2-23expr}) and moreover we obtain the following.
\begin{corollary}
The two-point function $G_{g,2}^{(3)}(T)$ for weighted cubic maps is given by
\begin{equation}
G_{g,2}^{(3)}(T) = \compactfrac{1}{36}(2+\partial_T)^2(1+\partial_T)^2\partial_T^3\log\cfun(T),
\end{equation}
with $\cfun(T)$ as in Theorem \ref{thm:twopoint12}.
\end{corollary}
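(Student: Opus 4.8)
The plan is to obtain $G_{g,2}^{(3)}(T)=G_{g,2}^{(3,3)}(T)$ by a single application of the degree-raising relation that follows from the loop equation of Proposition \ref{thm:loopequation}, starting from the already-known two-point function $G_{g,2}^{(2,3)}(T)$ recorded in (\ref{eq:G2-23expr}). The crucial observation is that the inner sum $\sum_{d_1'=1}^{d_1-2}$ in (\ref{eq:loopequation1}) is empty whenever $d_1\le 2$, so for $d_1=2$ the loop equation collapses to $(2+\partial_T)G_{F,2,\rt}^{(2,d)}(T)=2\,G_{F,2,\rt}^{(3,d)}(T)$. Substituting $G_{F,2,\rt}^{(d_1,d_2)}=d_1 G_{F,2}^{(d_1,d_2)}$ then gives, at fixed $F$ and for every $d\ge 1$, the relation $G_{F,2}^{(3,d)}(T)=\compactfrac13(2+\partial_T)G_{F,2}^{(2,d)}(T)$.

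First I would specialise this relation to $d=3$, so that it raises the degree of the first marked vertex from $2$ to $3$ while the second marked vertex stays cubic. Multiplying by $g^F$ and summing over $F$ — which is legitimate because $\partial_T$ commutes with the sum and the relevant series converge — yields the grand-canonical identity $G_{g,2}^{(3,3)}(T)=\compactfrac13(2+\partial_T)G_{g,2}^{(2,3)}(T)$. It then remains only to insert the closed form of $G_{g,2}^{(2,3)}(T)$ from (\ref{eq:G2-23expr}), namely $G_{g,2}^{(2,3)}(T)=\compactfrac{1}{12}(2+\partial_T)(1+\partial_T)^2\partial_T^3\log\cfun(T)$. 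Since the operators $(2+\partial_T)$, $(1+\partial_T)$ and $\partial_T$ all commute, the two factors of $(2+\partial_T)$ combine and one reads off
\[
G_{g,2}^{(3)}(T)=\compactfrac13(2+\partial_T)\cdot\compactfrac{1}{12}(2+\partial_T)(1+\partial_T)^2\partial_T^3\log\cfun(T)=\compactfrac{1}{36}(2+\partial_T)^2(1+\partial_T)^2\partial_T^3\log\cfun(T),
\]
which is the asserted formula, with $\cfun(T)$ as in Theorem \ref{thm:twopoint12}.

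I do not expect a genuine obstacle at this stage: the substantive work is already contained in Lemma \ref{thm:markovproperty} and Proposition \ref{thm:loopequation}, and what is left is operator algebra applied to a known expression. The only points deserving a moment's care are confirming that the empty-sum argument really does annihilate the one-point correction term for $d_1=2$ (so that the degree-raising relation holds with no $W_F^{(\cdot)}$ contribution), and checking that the small-$F$ boundary cases are consistent so that the fixed-$F$ identity survives the passage to the grand-canonical generating function. As a consistency check one may verify that the same expression is obtained by instead raising the \emph{second} degree, using the symmetry $G_{g,2}^{(d_1,d_2)}(T)=G_{g,2}^{(d_2,d_1)}(T)$ that is manifest from the symmetric definition (\ref{eq:twopointgeneral}).
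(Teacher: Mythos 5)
Your proposal is correct and follows essentially the same route as the paper: the paper also specialises the loop equation of Proposition \ref{thm:loopequation} to $d_1\le 2$, where the $W_F^{(\cdot)}$ sum is empty, obtains $G_{F,2}^{(3,d)}(T)=\compactfrac13(2+\partial_T)G_{F,2}^{(2,d)}(T)$, and combines this with (\ref{eq:G2-23expr}) to read off the stated formula. The operator algebra and the empty-sum observation you highlight are exactly the steps the paper relies on.
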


\begin{remark}
In principle one can apply similar techniques to obtain the three-point function $\bar{G}_{g,3}^{(3)}(S,T,U)$ from $\bar{G}_{g,3}^{(2)}(S,T,U)$.
However, this requires a non-trivial investigation of edge-cases and an extension of Lemma \ref{thm:markovproperty} to disconnected explored submaps. 
We leave this to future investigation.
\end{remark}

\begin{remark}
Except for the statement in Lemma \ref{thm:markovproperty}, we have not utilized the general Markovian properties that the exploration process most likely possesses.
For instance, most of the results for the Eden model exploration process, like that of Theorem \ref{thm:edenlength}, should hold for any exploration process governed by a (deterministic or probabilistic) algorithm that selects a frontier edge independently of the unexplored part of the planar map.
See e.g. \cite{angel_growth_2002} for an investigation in the case of infinite planar triangulations. 
\end{remark}

\section{General two-point function}\label{sec:gentwop}

Equations very similar to (\ref{eq:loopequation2}) already appeared in \cite{watabiki_construction_1995,ambjom_scaling_1995} (see also \cite{ambjorn_quantum_1997} section 4.7.2) and were used to derive the continuum expression for the two-point function of triangulations for the first time.
In fact, (\ref{eq:loopequation1}) and (\ref{eq:loopequation2}) are identical to (4.336) and (4.337) in \cite{ambjorn_quantum_1997} after a renormalization of the parameters $g\to \sqrt{g}$ and $z_i\to z_i/\sqrt{g}$ to go from a factor of $g$ per vertex to a factor of $g$ per face.
It was shown in \cite{ambjom_scaling_1995} that (\ref{eq:loopequation2}) can be solved explicitly once $W_g(z)$ is known and an initial condition at $T=0$ is given.
It should be noted that the differential equations in \cite{ambjom_scaling_1995,ambjorn_quantum_1997} are written in terms of the \emph{two-loop function}, which in terms of our weighted cubic maps should be interpreted as an adapted two-point function with the condition on the contributing maps that the geodesic distance to the first marked point is decreasing along each edge leading away from the second marked point.
Both satisfy the same equation but with different initial conditions.
Therefore we can recycle the derivation in \cite{ambjom_scaling_1995,ambjorn_quantum_1997} using the initial condition (\ref{eq:G2-generalzero2}), which in terms of the generating functions reads
\begin{equation}\label{eq:twopointinitial}
G_{g,\rt}(z_1,z_2;0) = \frac{\partial}{\partial z_1}\left(\frac{1}{z_2}\frac{W_g(z_2)-W_g(z_1)}{z_2-z_1}\right).
\end{equation}

The rooted one-point function $W_g(z)$, which in the literature is often called the \emph{disk function} of random triangulations, is well-known and is, for instance, easily calculated using matrix models (see e.g. \cite{brezin_planar_1978}) or loop equations (see e.g. \cite{ambjorn_quantum_1997} section 4.2).  
In our notation it takes the form
\begin{equation}
W_g(z) = \frac{1}{2}\left(z-z^2+(z-\alpha-\compactfrac{1}{2})\sqrt{(z+\alpha-\compactfrac{1}{2})^2-2g/\alpha}\right),
\end{equation}
which can be checked to fall off like $g/z$ as $z\to\infty$.

The standard approach to solving the linear first-order partial differential equations like (\ref{eq:loopequation2}) is through the method of characteristics, i.e. we first solve the ordinary differential equation
\begin{equation}\label{eq:hatzdiff}
\hat{z}_1'(T)=\hat{z}_1(T)-\hat{z}_1(T)^2-2W_g(\hat{z}_1(T)), \quad\quad \hat{z}_1(0) = z_1.
\end{equation}
One can check that in the $z_1\to\infty$ limit the solution is given by
\begin{equation}\label{eq:hatzexpr}
\hat{z}(T) = \alpha+\compactfrac{1}{2} + \frac{\Sigma^2}{\sinh(\Sigma T)\cfun(T)}, 
\end{equation}
which is monotonically decreasing and approaches $\alpha +\compactfrac{1}{2}$ as $T\to\infty$.
The general solution for $z_1> \alpha+1/2$ is therefore given by
\begin{equation}
\hat{z}_1(T) = \hat{z}(T + \hat{z}^{-1}(z_1)).
\end{equation}
The unique solution to (\ref{eq:loopequation2}) with initial condition (\ref{eq:twopointinitial}) is then simply
\begin{equation}
G_{g,\rt}(z_1,z_2;T) = \frac{\partial}{\partial z_1}\left(\frac{1}{z_2}\frac{W_g(z_2)-W_g(\hat{z}_1(T))}{z_2-\hat{z}_1(T)}\right).
\end{equation}

We conclude that the generating function $G_{g}(z_1,z_2;T)$ for unrooted two-point functions is
\begin{align}
G_{g}(z_1,z_2;T) &:= \sum_{d_1=1}^{\infty}\sum_{d_2=1}^{\infty} z_1^{-d_1-1}z_2^{-d_2-1} G_{g,2}^{(d_1,d_2)}(T)\nonumber\\
& = \frac{-1}{z_1z_2}\left(\frac{W_g(z_2)-W_g(\hat{z}_1(T))}{z_2-\hat{z}_1(T)}-\frac{W_g(z_2)-W_g(\hat{z}(T))}{z_2-\hat{z}(T)}\right),
\end{align} 
where the second term is needed to ensure $\lim_{z_1\to\infty}z_1G_{g}(z_1,z_2;T)=0$.
One can check explicitly that this expression is symmetric in $z_1$ and $z_2$ and that it reproduces the previously derived two-point functions.
Let us illustrate the case where the second marked vertex is univalent, corresponding to the coefficient of $z_2^{-2}$ in $G_{g}(z_1,z_2;T)$.
Combining (\ref{eq:hatzdiff}) and (\ref{eq:hatzexpr}) one may deduce that
\begin{equation}
\partial_TW_g(\hat{z}(T)) = \compactfrac{1}{2}\partial_T( \hat{z}'(T)-\hat{z}(T)+\hat{z}(T)^2 ) = \partial_T^3\log\cfun(T) = G_g^{(1)}(T).
\end{equation}
Using the explicit expansion
\begin{equation}
\hat{z}^{-1}(z_1) = \frac{1}{z_1} + \frac{1}{2z_1^2} + \frac{1}{3z_1^3} + \mathcal{O}(z_1^{-4}),
\end{equation}
one finds that 
\begin{align}
&[z_2^{-2}]G_{g}(z_1,z_2;T) = \frac{1}{z_1}[ W_g(\hat{z}_1(T))-W_g(\hat{z}(T)) ] \nonumber\\
&\quad\quad= \frac{1}{z_1}\left[ W_g(\hat{z}(T+\compactfrac{1}{z_1}+\compactfrac{1}{2z_1^2}+\compactfrac{1}{3z_1^3}+\ldots))-W_g(\hat{z}(T))\right]\nonumber\\
&\quad\quad= G_g^{(1)}(T) z_1^{-2} + \compactfrac{1}{2}(1+\partial_T)G_g^{(1)}(T) z_1^{-3} + \compactfrac{1}{6}(2+\partial_T)(1+\partial_T)G_g^{(1)}(T) z_1^{-4} + \mathcal{O}(z_1^{-5}),
\end{align}
in agreement with the formulas in Section \ref{sec:twop}.

\section{Large graph limits}\label{sec:scaling}

All generating functions that we have encountered have a radius of convergence equal to $g^{*}=\frac{1}{12\sqrt{3}}$ and are non-analytic at $g=g^{*}$.
As usual one can study the asymptotic growth of their coefficients by expanding around $g=g^{*}$.
Writing $g = g^{*}(1-24\epsilon^2)$ and expanding $\alpha$ and $\Sigma$ around $\epsilon=0$ yields
\begin{align}
\alpha &= \frac{1}{2\sqrt{3}} + \frac{2\epsilon}{\sqrt{3}} - \frac{4\epsilon^2}{3\sqrt{3}}+\mathcal{O}(\epsilon^3), \\
\Sigma &= \epsilon^{\frac{1}{2}} + \compactfrac{2}{3}\epsilon^{\frac{3}{2}} - \epsilon^{\frac{5}{2}}+\mathcal{O}(\epsilon^{\frac{7}{2}}).
\end{align}

\subsection{scaling limit}
If in addition we scale distances like $T = T_0 \epsilon^{-\frac{1}{2}}$ we find that
\begin{equation}
\cfun(T) = \frac{1}{\sqrt{12}} \sinh T_0 + 2 \sqrt{\epsilon}\cosh T_0 + \mathcal{O}(\epsilon)
\end{equation} 
and the various two-point functions are two leading order given by
\begin{align}
G_{g,2}^{(1)}(T) &= 2\, \frac{\cosh T_0}{\sinh^3 T_0}\, \epsilon^{\frac{3}{2}} + \mathcal{O}(\epsilon^2),\label{eq:conttwopointfunction}\\
G_{g,2}^{(2)}(T) &= \frac{1}{2}\, \frac{\cosh T_0}{\sinh^3 T_0}\, \epsilon^{\frac{3}{2}} + \mathcal{O}(\epsilon^2),\\
G_{g,2}^{(3)}(T) &= \frac{2}{9}\, \frac{\cosh T_0}{\sinh^3 T_0}\, \epsilon^{\frac{3}{2}} + \mathcal{O}(\epsilon^2).
\end{align}
Up to numerical factors these agree exactly with the scaling of the two-point functions of triangulations \cite{ambjom_scaling_1995}, quadrangulations \cite{bouttier_geodesic_2003} and general planar maps \cite{ambjorn_trees_2013,bouttier_two-point_2013,fusy_three-point_2014}.
Hence, in the large $F$ limit, the distribution of distances between the marked vertices agrees with that of the Brownian map.

On the other hand $F^{\mathrm{even}}_g(S,T,U)$ and $F^{\mathrm{odd}}_g(S,T,U)$ are identical to leading order,
\begin{equation}\label{eq:Flimit}
F^{\mathrm{even/odd}}_g(S,T,U) = \frac{1}{12}\,\frac{\sinh^2S_0\sinh^2T_0\sinh^2U_0\sinh^2(S_0+T_0+U_0)}{\sinh^2(S_0+T_0)\sinh^2(T_0+U_0)\sinh^2(U_0+S_0)} \epsilon^{-1} + \mathcal{O}(\epsilon^{-\frac{1}{2}}),
\end{equation}
while the leading orders of $\bar{G}_{g,3}^{(1)}(S,T,U)$ and $\bar{G}_{g,3}^{(2)}(S,T,U)$ are obtained by acting on (\ref{eq:Flimit}) with $\epsilon^{\frac{3}{2}}\partial_{S_0}\partial_{T_0}\partial_{U_0}$ and $\compactfrac{1}{8}\epsilon^{\frac{3}{2}}\partial_{S_0}\partial_{T_0}\partial_{U_0}$ respectively.
Again we observe that in the scaling limit, up to numerical factors, the two- and three-point functions agree exactly with those of quadrangulations \cite{bouttier_three-point_2008} and general planar maps \cite{fusy_three-point_2014}.

One can check that
\begin{align}
\lim_{U\to 0}\bar{G}^{(2)}_{g,3,\mathrm{conf}}(S,T-S,U) &= \frac{1}{2}\left(1+\frac{1}{\sqrt{3}}\right) \frac{\cosh T_0}{\sinh^3 T_0}\,\epsilon^{\frac{3}{2}} + \mathcal{O}(\epsilon^2)\nonumber\\
&= \left(1+\frac{1}{\sqrt{3}}\right) G_{g,2}^{(2)}(T) + \mathcal{O}(\epsilon^2).\label{eq:vertexdensity}
\end{align}
Hence
\begin{equation}\label{eq:g2vertasym}
G^{(2)}_{g,2,\mathrm{vert}}(T) = \left(1+\frac{1}{\sqrt{3}}\right) T G_{g,3}^{(2)}(T) + \mathcal{O}(\epsilon^{\frac{3}{2}}).
\end{equation}
We will discuss the consequences of this simple relation in (\ref{sec:discussion}).

\subsection{Local limit}
Finally, let us have a look at the $F\to\infty$ limit of the two-point function $G_{F,2}^{(1)}(T)$ while keeping $T$ fixed, also known as the \emph{local limit} of $G_{F,2}^{(1)}(T)$.
It can again be found by scaling $g = g^{*}(1-24\epsilon^2)$ in $G_{g,2}^{(1)}(T)$ and determining the first non-analytic term in the expansion in $\epsilon$, which turns out to be the one proportional to $\epsilon^3$.
Performing this calculation shows that to leading order in $F$,
\begin{equation}
G_{F,2}^{(1)}(T) \sim \frac{F^{-\frac{5}{2}} (12\sqrt{3})^F}{630\sqrt{2\pi}} \left(3+23 T'+30 T'^2+10 T'^3 - \frac{3}{(1+T')^4}\right),\quad\quad T' = \frac{T}{2\sqrt{3}},
\end{equation}
while analogous formulas for $G_{F,2}^{(2)}(T)$ and $G_{F,3}^{(3)}(T)$ are obtained by taking derivatives with respect to $T$.

As an application one may obtain the $F\to\infty$ limit of the expected length $\langle k\rangle_{F,w}$ of an Eden model exploration process with stopping weight $w$ started at the root edge of a random cubic planar map with $F$ faces.
Using Theorem \ref{thm:edenlength} we find
\begin{equation}
\lim_{F\to\infty}\langle k\rangle_{F,w} = \frac{p_0(w)}{w^4} - \frac{144\sqrt{3}}{35} w^3(1+w)^2 e^{-2\sqrt{3}w}\Ei(2\sqrt{3}w),
\end{equation}
where $\Ei(\cdot)$ is the exponential integral function and $p_0(w)$ is the eighth-order polynomial
\begin{align}
p_0(w)&=\frac{1}{35}\Big(5 + 10w + 28w^2 + 46w^3 - 24w^4 - 24w^5 - 84w^6 - 144w^7 - 72w^8\nonumber\\
&\quad\quad\quad +2\sqrt{3}\, w(1+w)^2(5+3w^2+6w^4)\Big).
\end{align}

\section{Discussion}\label{sec:discussion}

We have derived explicit expressions for the two- and three-point functions for random (almost) cubic maps with exponential edge weights.
In the grand-canonical scaling limit they agree with the analogous expressions for the Brownian map.
This suggest that the deformation of the geometry introduced by the random edge weights has no effect, other an overall change of scale, on the continuum geometry.

Apart from the novel enumeration techniques for cubic maps, the current study also sheds some light on the relation between inequivalent distances on the same cubic map.
To see this let us return to the scaling limit (\ref{eq:g2vertasym}) of the two-point function $G_{g,2,\mathrm{vert}}^{(2)}(T)$ with a marked vertex on the geodesic of length $T$.
Although we do not give a proof, one may deduce from (\ref{eq:g2vertasym}) that the same linear relation holds between the large-$F$ limits of $G^{(2)}_{F,2,\mathrm{vert}}(T)$ and $TG^{(2)}_{F,2}(T)$ as long as one keeps $TF^{-1/4}$ constant.
Therefore the expected number $\langle V \rangle_{F,T}$ of cubic vertices on a geodesic of length $T$ scales linearly with $T$ with ratio $1+1/\sqrt{3}$, i.e.
\begin{equation}
\lim_{\substack{F\to\infty\\ TF^{-1/4}\text{ const}}} \frac{\langle V \rangle_{F,T}}{T} = 1+\frac{1}{\sqrt{3}}.
\end{equation}
Of course, this formula should also hold for the situation where the marked vertices are cubic.
It then follows from (\ref{eq:graphdistbound}) that the expected graph distance $\langle d_{\map}(v_1,v_2) \rangle_{F,T}$ is bounded from above by $(1+1/\sqrt{3})T$ as $F\to\infty$.
On the other hand, by the central limit theorem, conditioned on $d_{\map}(v_1,v_2)$ almost surely $T\leq d_{\map}(v_1,v_2)$ as $F\to\infty$.
Therefore, $\langle d_{\map}(v_1,v_2) \rangle_{F,T}$ is asymptotically bounded from below by $T$.
We conclude that 
\begin{equation}
\liminf_{\substack{F\to\infty\\ TF^{-1/4}\text{ const}}}\frac{\langle d_{\map}(v_1,v_2) \rangle_{F,T}}{T}\geq 1 \quad\text{and}\quad  \limsup_{\substack{F\to\infty\\ TF^{-1/4}\text{ const}}}\frac{\langle d_{\map}(v_1,v_2) \rangle_{F,T}}{T}\leq 1+\frac{1}{\sqrt{3}}.
\end{equation} 

\begin{figure}
\begin{center}
\includegraphics[width=0.5\linewidth]{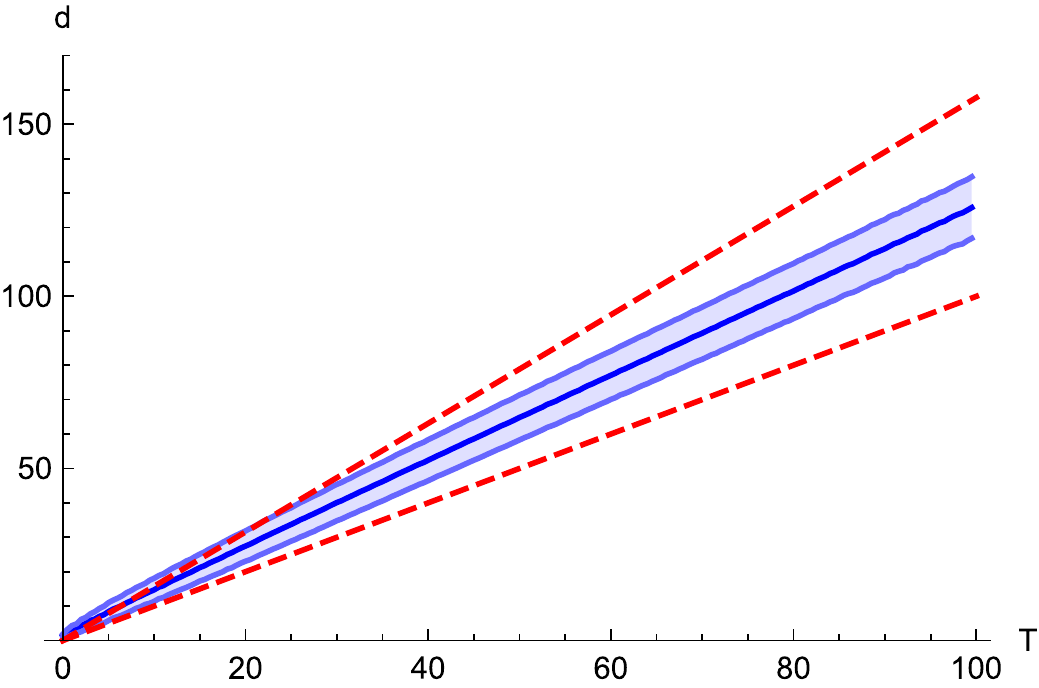}
\end{center}
\caption{The numerical expectation value (dark curve) together with the standard deviation (shaded area) of the graph distance $d$ between random pairs of vertices that are conditioned to have geodesic distance (in a small interval around) $T$ in a random weighted cubic map with $F=32000$ faces. The dashed lines correspond to $d=T$ and $d=(1+1/\sqrt{3})T$ respectively.}
\label{fig:simulation}
\end{figure}

Notice that (\ref{eq:vertexdensity}) does not only tell us about the expect total number of vertices on the geodesic, it also implies that the expected \emph{density} of vertices is equal to $1+1/\sqrt{3}$ as $F\to\infty$ independently of the position $S$ along the geodesic. 
It is quite plausible that the densities at different distances are independent enough to make the average density of vertices along the geodesic almost surely equal to $1+1/\sqrt{3}$.
If that is indeed the case, one may conclude that in the limit $F\to\infty$ the ratio of the graph distance and geodesic distance between a random pair of vertices is almost surely between $1$ and $1+1/\sqrt{3}$. 

This suspicion can be easily checked numerically using a Monte Carlo simulation, as can be seen in Figure \ref{fig:simulation} for a random weighted cubic map with $F=32000$ faces.
In fact, the numerics suggest that the graph distance is precisely in the middle of the two bounds, i.e. that ratio is $1+1/(2\sqrt{3})$, with high probability.
Without going into details let us mention that the ratio $1+1/(2\sqrt{3})$ can in fact be obtained heuristically by comparing the continuum limit of the differential equation (\ref{eq:loopequation2}) to the analogous equation (19) in \cite{kawai_transfer_1993}, related to a transfer matrix for the graph distance on cubic maps.

In addition to the graph distance $d_\map$ and geodesic distance (or first-passage time) $T$ on the cubic map, there exist another natural distance $d_\map^*$, which is the graph distance on the triangulation dual to the cubic map.
Based on a comparison of the continuum two-point function (\ref{eq:conttwopointfunction}) to that of large random triangulations, which can e.g. be deduced from \cite{gall_uniqueness_2013}, Theorem 1.1, we expect the ratio of the geodesic distance $T$ to the triangulation-distance to be given by $2\sqrt{3}$.

It would be interesting to see whether there is any universality to the simple ratios between the various distances. 
In a forthcoming paper we will study these relations in the more general setting of Boltzmann planar maps, where vertices of arbitrary degree are allowed.

\subsection*{Acknowledgments}
We are grateful to an anonymous referee for many useful suggestions.
The authors acknowledge support from the ERC-Advance grant 291092,
``Exploring the Quantum Universe'' (EQU). JA acknowledges support
of FNU, the Free Danish Research Council, from the grant
``quantum gravity and the role of black holes''.   
In addition JA was supported 
in part by Perimeter Institute of Theoretical Physics.
Research at Perimeter Institute is supported by the Government of Canada
through Industry Canada and by the Province of Ontario through the 
Ministry of Economic Development \& Innovation.
 
\appendix
\section{Coincidence limit calculation}
In this appendix we will show the non-trivial fact that $\partial_S\partial_U\bar{G}_{g,3}^{(1)}(S,T,U)\big|_{S=U=0} = \bar{G}_{g,3}^{(1)}(0,T,0)$.
The odd part is easy since the derivatives $\partial_S^2\partial_U^2$ need to act on the factor $\sinh^2\Sigma S\sinh^2\Sigma U$ in order to make the result nonzero.
Hence
\begin{align}
\partial_T\partial_S^2\partial_U^2F_g^{\mathrm{odd}}(S,T,U)|_{S=U=0} &= 4\partial_T\left[(\alpha^2-\Sigma^2)^2\sinh^2\Sigma T \frac{\cfunp^2(T)}{\cfun^4(T)}\right]\\
&= 4\partial_T\left[ (\cfun'(T)-\alpha \cfun(T))^2 \frac{(\cfun'(T)+\alpha \cfun(T))^2}{\cfun^4(T)}\right]\\
&= 4\partial_T\left[ \left(\frac{\cfun'(T)}{\cfun(T)}\right)^2-\alpha^2\right]^2, \label{eq:fevencoin}
\end{align}
where in the second equality we used that $\cfunp(T) = \cfun'(T)+\alpha \cfun(T)$.
The even part requires more work:
\begin{align}
&\partial_T\partial_S^2\partial_U^2F_g^{\mathrm{even}}(S,T,U)|_{S=U=0}=\partial_T\partial_S^2\Bigg[ \partial_U^2 \frac{\cfun^2(S+T+U)}{\cfun^2(S+T)}+2\left(\partial_U\frac{\cfun^2(T)}{\cfun^2(T+U)}\right)\partial_U\frac{\cfun^2(S)}{\cfun^2(S+U)}\nonumber\\
&\quad+ 2\left(\partial_U\frac{\cfun^2(S+T+U)}{\cfun^2(S+T)}\right)\partial_U\left(\frac{\cfun^2(U)}{\Sigma^2}+\frac{\cfun^2(T)}{\cfun^2(T+U)}+\frac{\cfun^2(S)}{\cfun^2(S+U)}\right)\Bigg]_{U=S=0}\\
&=\partial_T\partial_S^2\Bigg[  \frac{\partial_T^2\cfun^2(T+S)}{\cfun^2(T+S)}+8\frac{\cfun'(T)}{\cfun(T)}\frac{\cfun'(S)}{\cfun(S)}+8\frac{\cfun'(S+T)}{\cfun(S+T)} \left(\frac{\cfun'(0)}{\cfun(0)}-\frac{\cfun'(T)}{\cfun(T)}-\frac{\cfun'(S)}{\cfun(S)}\right)\Bigg]_{S=0}\\
&=\partial_T\left[ \partial_T^2 \frac{\partial_T^2\cfun^2(T)}{\cfun^2(T)}-8\left(\partial_T^2\frac{\cfun'(T)}{\cfun(T)}\right)\frac{\cfun'(T)}{\cfun(T)} - 16 \left(\partial_T\frac{\cfun'(T)}{\cfun(T)}\right)\partial_S\frac{\cfun'(S)}{\cfun(S)}\right]_{S=0}\nonumber\\
&= 4\partial_T\left[ \left(\partial_T \frac{\cfun'(T)}{\cfun(T)}\right)^2-\frac{\cfun'(T)}{\cfun(T)}\partial_T^2\frac{\cfun'(T)}{\cfun(T)}-4(\Sigma^2-\alpha^2)\partial_T\frac{\cfun'(T)}{\cfun(T)}\right]\\
&= 4\partial_T\left[ -\left(\frac{\cfun'(T)}{\cfun(T)}\right)^4 +4(\Sigma^2-\alpha^2)\left(\frac{\cfun'(T)}{\cfun(T)}\right)^2\right],\label{eq:foddcoin}
\end{align}
where we used $\cfun''(T) = \Sigma^2 \cfun(T)$ and $\partial_S \cfun'(S)/\cfun(S)|_{S=0} = \Sigma^2-\alpha^2$.
Combining (\ref{eq:fevencoin}) and (\ref{eq:foddcoin}) and using that $2\Sigma^2 = 3\alpha^2-1/4$ we obtain
\begin{align}
\partial_T\partial_S^2\partial_U^2F_g(S,T,U)|_{S=U=0} = 4\partial_T\left[ (4\Sigma^2-6\alpha^2)\left( \frac{\cfun'(T)}{\cfun(T)}\right)^2\right] = 2 \partial_T^3\log \cfun(T).
\end{align}
 
\bibliographystyle{hsiam}
\bibliography{eden}

\end{document}